\documentclass{egpubl}
\usepackage{egsgp2026}
\usepackage{mathtools}
 
\SpecialIssuePaper         

\CGFccbyncnd

\usepackage[T1]{fontenc}
\usepackage{dfadobe}  
\usepackage{soul}

\usepackage{amsmath, amssymb, amsfonts}

\newtheorem{theorem}{Theorem}
\newtheorem{remark}{Remark}

\usepackage{booktabs} 

\usepackage{algorithm}
\usepackage{algorithmic}
\renewcommand{\algorithmicrequire}{\textbf{Input:}}
\renewcommand{\algorithmicensure}{\textbf{Output:}}

\usepackage{lineno}

\biberVersion
\BibtexOrBiblatex
\usepackage[backend=biber,bibstyle=EG,citestyle=alphabetic,backref=true]{biblatex} 
\electronicVersion
\PrintedOrElectronic

\ifpdf \usepackage[pdftex]{graphicx} \pdfcompresslevel=9
\else \usepackage[dvips]{graphicx} \fi

\usepackage{egweblnk} 

\usepackage[prologue,table,xcdraw]{xcolor}

\title[Tangent Blow-Ups]{Tangent Blow-Ups for Processing Non-Manifold Geometry}

\author[A. Petrov \& M. Nabizadeh \& A. Dodik \& J. Solomon]{
\parbox{\textwidth}{\centering
Alice~Petrov$^{1}$
\quad
Mohammad~Sina~Nabizadeh$^{1}$
\quad
Ana~Dodik$^{1}$
\quad
Justin~Solomon$^{1}$
\\
\parbox{\textwidth}{\centering
$^{1}$MIT CSAIL, Cambridge, USA
}}}

\begin{document}

\teaser{
 \includegraphics[width=\linewidth]{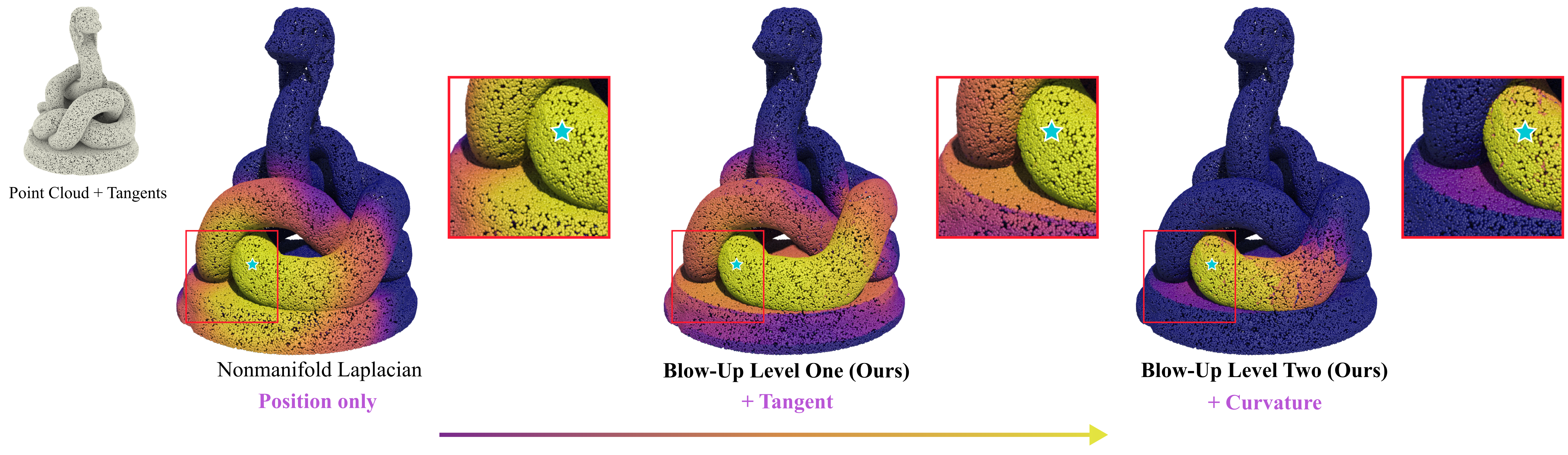}
 \centering
  \caption{In the figure, different metrics give different patterns of heat diffusion. At singularities and intersections, standard methods (left) \cite{sharp2020laplacian} treat all geometric components equally. Our tangent blow-up representation enriches each point with geometric information such as tangents (middle) and curvature (right) by iteratively lifting data into a higher-dimensional space where singularities are resolved and distances can better reflect the underlying geometry. This construction enables manifold-based point cloud processing pipelines to extend to non-manifold and singular shapes.}
\label{fig:teaser}
}

\maketitle

\begin{abstract}
Many geometry processing pipelines implicitly assume their input data is a manifold, or is sampled from one, with a unique tangent plane at every point. Geometric data, however, routinely contains sharp features like edges, corners, self-intersections, branching junctions, and other singularities, rendering standard methods ill-defined at these points. To bring geometry processing to these and other singular spaces, we introduce the ``tangent blow-up,'' a representation inspired by algebraic geometry that restores structure at singularities by lifting to the product of the ambient space and the Grassmannian of tangent planes. After iterating this construction, points that coincide in position but differ in tangent direction, curvature, or higher-order contact become well-separated. We equip the tangent blow-up with a product metric and define discretized differential operators, such as the gradient, divergence, and Laplacian, directly in the lifted domain. We demonstrate our framework across geodesic computation, segmentation, surface parameterization, and curvature estimation.
\end{abstract}



\section{Introduction}

The discretization of differential operators on point clouds typically relies on building local neighborhoods where the surface is well-approximated by a single tangent plane~\cite{hoppe1992surface, vaxman2016directional, robbin2022introduction}.
For example, the point cloud Laplacian constructs adjacency graphs from spatially-nearby points~\cite{coifman2006diffusion, belkin2009constructing}, and curvature estimators like polynomial jet fitting locally fit truncated Taylor expansions to the surface~\cite{CazalsPouget2005}.
The unifying assumption behind these methods is that data lies on or near a manifold.
Non-manifold structures, however, abound in real-world data.
For example, in geological modeling, the geometry and mechanical interactions 
at intersecting fault junctions dictate stress accumulation and slip distributions during seismic events~\cite{marshall2008effects}.

Standard geometry processing algorithms can be applied to non-manifold data, but their output at points with multiple intersecting tangent planes---so-called \emph{singular points}---typically does not respect the underlying geometry.
For example, in Figure~\ref{fig:teaser}, the nonmanifold Laplacian~\cite{sharp2020laplacian} diffuses heat across intersecting surfaces.
More generally, any algorithm that builds neighborhoods from Euclidean distance alone will inevitably mix points from geometrically unrelated components.
Hence, the question we address in this work is: \emph{how can differential operators be defined on point clouds whose underlying geometry contains singularities, without requiring singularity detection or manifold repair as a preprocessing step?}

Few computational approaches exist for directly handling non-manifold data.
Existing techniques either detect and resolve singularities during preprocessing, as in non-manifold mesh repair~\cite{charton2021mesh}, or estimate topological invariants while bypassing discretization of differential operators, as in persistent intersection homology~\cite{Bendich2011}.
One can attempt to stratify a given dataset into clusters, each of which is well-approximated by a suitable submanifold of the ambient space~\cite{Stolz2020, aamari2024theory}, a procedure known as \emph{stratification learning}~\cite{bendich2010stratification, Nanda2020}.
However, such decompositions can alter the topology of the space.
For example, the lemniscate (figure-eight) is a classic example of an immersed curve that is not an embedding, and decomposing it into two manifold pieces destroys the global connectivity of the self-intersecting loop (see Figure~\ref{fig:gfig8-decomposition}).

\begin{figure}[htb]
  \centering
  \includegraphics[width=\linewidth]{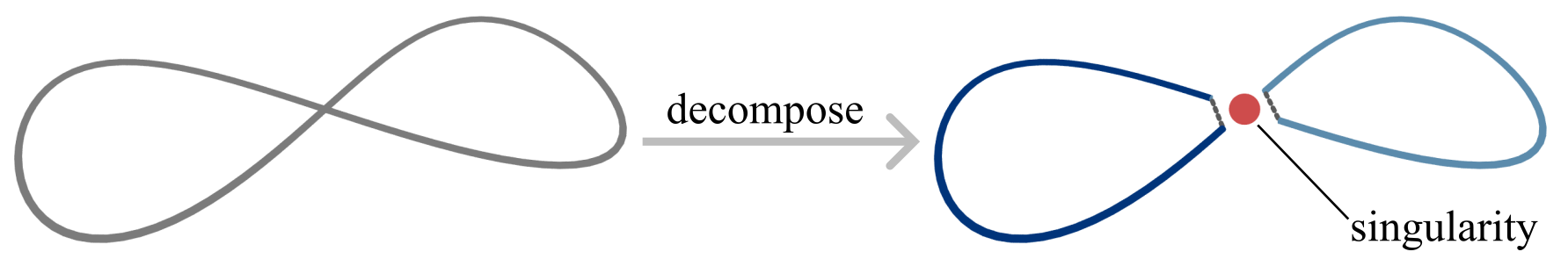}
  \caption{\label{fig:gfig8-decomposition}Decomposing a figure-eight curve into manifold pieces destroys the global connectivity of the self-intersecting loop.}
\end{figure}

The core difficulty at a singularity is that the tangent plane is not unique, and it is precisely the tangent plane that many algorithms use to define local neighborhoods, differential operators, and curvature. 
In the continuous realm, the \emph{Nash blow-up}~\cite{Nobile1975, Hironaka1983} in algebraic geometry resolves singularities in algebraic varieties by lifting points into the product space of positions and tangent planes (see Figure~\ref{fig:blowup-intuition}).  
In the resulting space, transversely intersecting points that coincide spatially are separated by their tangent-plane coordinates.
Unlike more heuristic approaches, the Nash blow-up resolves singularities by embedding points into regular manifolds in a higher-dimensional space.

We bring this continuous construction into the discrete, computational setting of point cloud geometry processing. We equip the product space of positions and tangent planes with a metric that admits an isometric embedding into Euclidean space. 
This embedding reduces the entire blow-up pipeline to a coordinate transformation, after which Euclidean constructions such as $k$-nearest-neighbor graphs and Gaussian kernels apply directly, enabling the spectral and curvature analyses shown in Figure~\ref{fig:mobius-intro}.
We further show that, since the output has the same structure as the input, the blow-up can be iterated. Each iteration 
resolves tangential intersections of progressively higher contact order (see Figure~\ref{fig:second-blowup}).
To justify the discrete pipeline, we analyze the continuous lifted space from which we assume points are sampled. We prove a \emph{separation theorem}: the lifted metric enforces a lower bound between points with distinct tangents
and a \emph{smoothness theorem}: the lifted embedding of a smooth manifold is itself smooth.
These results guarantee before sampling that transversely intersecting components are fully resolved into smooth, well-separated manifolds in the lifted space.
Since we sample point clouds from this lifted space, the separation bound transfers directly to the embedded points (Corollary~\ref{cor:discrete-separation}), and standard convergence results for point cloud Laplacians~\cite{belkin2006convergence,belkin2009constructing} apply to each manifold component individually. We address nontransverse intersections in \S\ref{sec:iterated-blowup}.

\begin{figure}
  \centering
  \includegraphics[width=\linewidth]{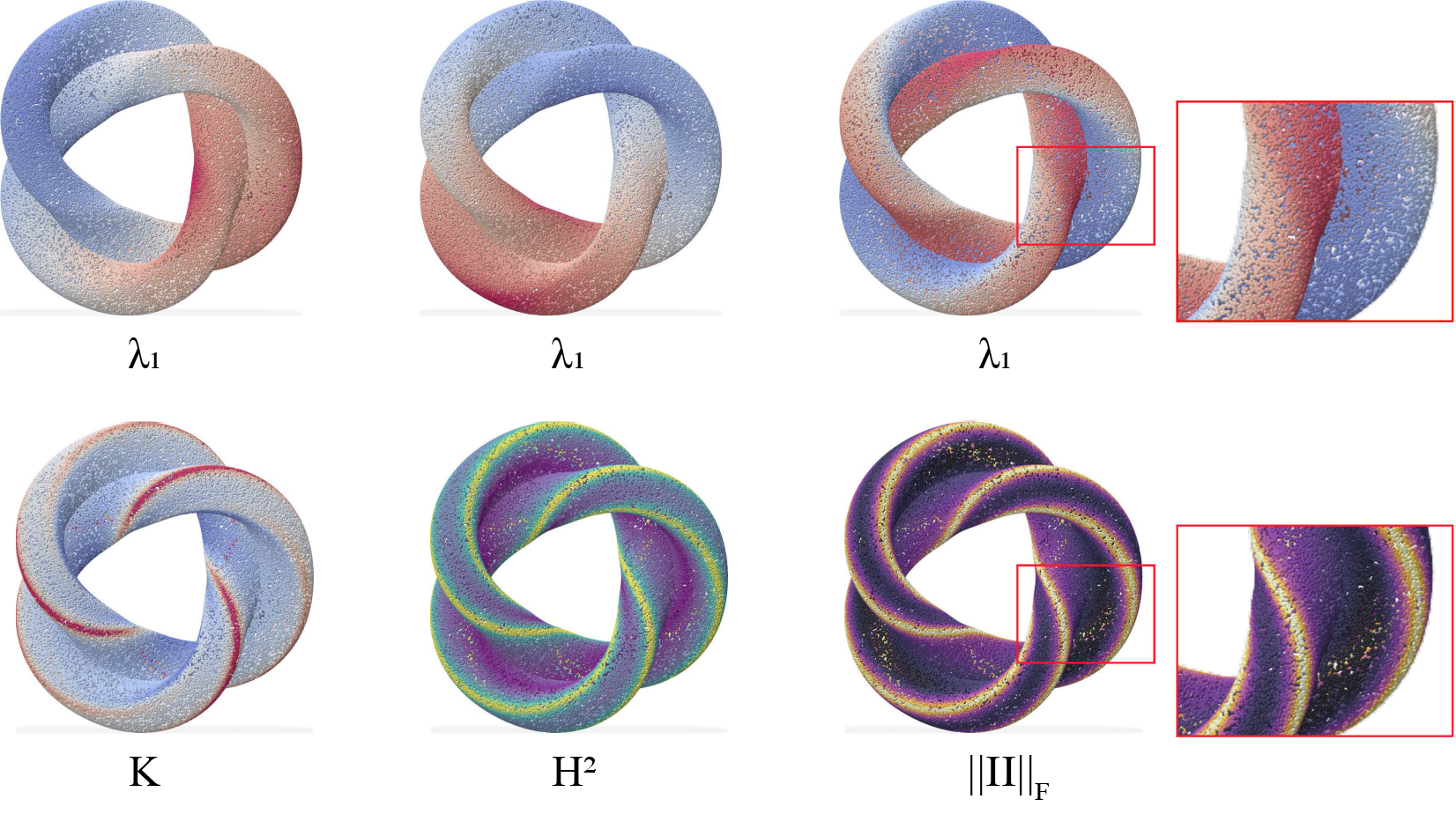}
  \caption{\label{fig:mobius-intro}Spectral and curvature analysis of Banchoff's Klein bottle, a self-intersecting nonorientable surface~\cite{banchoff1976minimal,Busser-nonorientable}. Top row: the three smallest non-trivial eigenvectors of our lifted Laplacian (\S\ref{sec:lifted-laplacian}). Bottom row: Gaussian curvature, squared mean curvature, and the Frobenius norm of the second fundamental form, all computed from the unoriented projector field (\S\ref{sec:curvature})}. 
\end{figure}

Our contributions are:
\begin{itemize}
\item A discrete tangent blow-up representation for lifting point clouds that enables discretized differential operators to be well-defined directly on non-manifold geometry without singularity detection or manifold repair.

\item An iterated blow-up that resolves tangential intersections of progressively higher contact order. We demonstrate that the second level separates components that agree to first order but differ in second-order geometry.

\item Separation and smoothness theorems for the continuous lifted space that justify the discrete construction. For transverse intersections, we prove a distance lower bound proportional to the angular gap between tangent planes and show that the lifted embedding of a smooth manifold is also smooth. 

\item Discretized Laplacian, gradient, and divergence operators defined directly on the lifted space. 

\item Experimental evaluation of segmentation, geodesic approximation, and curvature estimation on non-manifold point clouds and validation that the blow-up yields correct curvature estimates on parametric immersions where existing methods fail near singularities.
\end{itemize}

\section{Related Work}
\paragraph*{Nash blow-ups in algebraic geometry.}
The Nash blow-up pairs each point of a surface with its tangent plane and takes the closure, filling in limiting tangent planes over the singularities. Introduced by Semple~\cite{Semple1954} and Nash~\cite{nash1996arc} and studied by Nobile~\cite{Nobile1975}, Gonzalez-Sprinberg~\cite{GonzalezSprinberg1977}, and Hironaka~\cite{Hironaka1983}, the construction was originally motivated by the question of whether iterated Nash blow-ups can resolve all singularities in algebraic varieties. For curves in characteristic zero, the answer is affirmative~\cite{Nobile1975,Rebassoo1977}; in dimension four and higher, counterexamples exist~\cite{CastilloEtAl2026}.
This distinction is important for our setting since we do not claim that iterated tangent blow-ups resolve arbitrary non-smooth geometry. Instead, our work imports the tangent-separation idea of the Nash blow-up into geometry processing, where the goal is to define operators on point clouds.

\paragraph*{Varifolds.}
A $d$-varifold is a (Radon) measure on $\mathbb{R}^n \times
\mathrm{Gr}(d,n)$, encoding a joint distribution of mass and unoriented tangent planes. The framework naturally accommodates singularities and tangent multiplicity, and has been used for surface approximation~\cite{BuetLeonardiMasnou2017}, curvature estimation~\cite{BuetLeonardiMasnou2022}, and mean curvature flow of point clouds~\cite{BuetRumpf2022}. Our lifted representation uses the same product space, but varifold methods define geometric quantities through integration against test forms, while our construction produces a Euclidean point cloud.

\paragraph*{Feature-sensitive embeddings.}
Several geometry processing methods incorporate normals into distance computations to preserve geometric features. The isophotic metric of Pottmann et al.~\cite{PottmannEtAl2004} weights curve lengths on a triangle mesh by the variation of surface normals. Lai et al.~\cite{LaiEtAl2006} embed mesh vertices and normals into $\mathbb{R}^6$ for feature classification and editing. Related constructions lift meshes into position-normal product spaces so that isotropic operations in the lifted space yield anisotropic or feature-aligned results in the original domain, with applications to remeshing~\cite{CanasGortler2006, LevyBonneel2013}. These methods share with our work the idea of augmenting spatial coordinates with tangent or normal data, but they differ in scope and assumptions. They operate on manifold triangle meshes and assign a single oriented normal per vertex. On a non-manifold mesh, one vertex can have multiple normals, and the augmented embedding does not separate them.

\paragraph*{Modified operators on manifolds.}
Bilateral filters suppress averaging across discontinuities by weighting distances using both spatial proximity and feature similarity~\cite{TomasiManduchi1998,Jones2003, Fleishman2003}. Anisotropic diffusion operators~\cite{ AndreuxEtAl2014,BoscainiEtAl2016} control the direction of information flow via position-dependent diffusion tensors on the tangent plane. These methods modify operators to incorporate geometric information, but they are task-specific and do not define a global geometric representation.

\paragraph*{Singularity detection and resolution.}
In mesh processing, non-manifold singularities are typically resolved as a preprocessing step where edges and vertices are classified combinatorially, and the mesh is split or repaired before downstream algorithms~\cite{ZhouEtAl2016, ValqueLazard2025, charton2021mesh, attene2013polygon}. Sharp and Crane~\cite{sharp2020laplacian} define an intrinsic Delaunay triangulation that handles non-manifold edges directly and yields a well-defined Laplacian without mesh repair, but their construction operates on mesh connectivity and does not use tangent plane information. For point clouds, recent work detects singularities via local topological or statistical tests~\cite{von2023topological, lim2025hades}, but these methods detect singularities without defining differential operators.

\paragraph*{Curvature estimation.}
Point cloud curvature estimation typically proceeds by fitting a local model to the $k$-nearest neighbors. For example, jet fitting~\cite{CazalsPouget2005} uses truncated Taylor expansions, while corrected curvature measures (CNC)~\cite{LachaudRomonThibert2022, LachaudCoeurjolly2023} integrate Lipschitz-Killing forms over randomly sampled triangles. Classical normal-difference approaches~\cite{Rusinkiewicz2004, cao2021efficient} estimate curvature from how normals vary across mesh edges or over local neighborhoods of point clouds. These methods build neighborhoods in the ambient space $\mathbb{R}^3$ and assume that points are sampled from a smooth manifold.

\paragraph*{Stratification learning.}
A separate line of work recovers the topological structure of stratified spaces from data, using tools such as persistent intersection homology~\cite{Bendich2011}, local cohomology and homology transfer~\cite{Bendich2012, Nanda2020}, geometric anomaly detection~\cite{Stolz2020}, subspace clustering~\cite{Vidal2005}, and mixed-density estimation~\cite{Haro2006}. These methods identify non-manifold structure but do not resolve singularities or discretize differential operators, which is the focus of our work.
Most closely related to our work, Tinarrage~\cite{tinarrage2023recovering} estimates the tangent bundle of a sampled immersed manifold and applies persistent homology to the resulting lifted measure. This shares our principle of separating self-intersections, while aiming at topological inference from a single lift rather than the differential operators and iterated higher-order lifts developed here.

\paragraph*{Point cloud reconstruction and processing.}
Neural implicit methods reconstruct surfaces from noisy point samples~\cite{erler2020points2surf,ma2021neural}. Other recent reconstruction methods include Symmetrized Poisson Reconstruction~\cite{kohlbrenner2025symmetrized}, which encodes each sample by the outer product of its normal, and EdgeMovingNet~\cite{yang2025edgemovingnet}, which jointly predicts edge features and normals to preserve sharp structures. Recent work has also produced increasingly accurate estimators for local differential quantities; we refer the reader to the survey of Arnal--Anger et al.~\cite{arnal2026survey}. Similarly, learning-based estimators such as PCPNet~\cite{guerrero2018pcpnet} and DeepFit~\cite{ben2020deepfit} extract normals and higher-order quantities such as curvature from point clouds. However, nearly all the methods above rely on local neighborhoods in the ambient space or assume a manifold structure. Our construction is meant to enable existing algorithms, like those above, to apply to non-manifold data without additional priors or per-task training.

\paragraph*{Normal estimation.}
Our discrete blow-up assumes that each sample carries at least one tangent plane, equivalently an unoriented normal direction for surfaces in $\mathbb{R}^3$. Robustly estimating normals from raw point clouds is itself a substantial problem. Classical approaches include local PCA~\cite{kambhatla1997dimension}, while recent learned estimators improve robustness to noise, outliers, density variation, and orientation ambiguity~\cite{li2022hsurf, li2025high}. These methods extract tangent information and are therefore complementary to our work, with better estimation improving the lifted representation directly. We rely on this in our evaluation: Figure~\ref{fig:rooms} uses HSurf-Net~\cite{li2022hsurf} normals on real indoor scans, while Figures~\ref{fig:tangent-noise} and~\ref{fig:noise-study} use local PCA tangents under controlled noise.

\section{Mathematical Preliminaries}
At a singular point, multiple tangent planes coexist at the same position.
The tangent blow-up resolves this ambiguity by recording which tangent plane is associated to each point. Rather than representing a sample by its position $x \in \mathbb{R}^n$ alone, we represent it by pairing each position with its tangent plane. Singular points that coincide in space but have different tangents become multiple, distinct points in the lifted representation. For example, at the transverse self-intersection of a figure-eight, the singular point $x_0$ is replaced by two lifted points,$(x_0, T_1 {\nearrow})$ and $(x_0, T_2 {\searrow})$ (see Figure~\ref{fig:blowup-intuition}). 

\begin{figure}[htb]
  \centering
  \includegraphics[width=.8\linewidth]{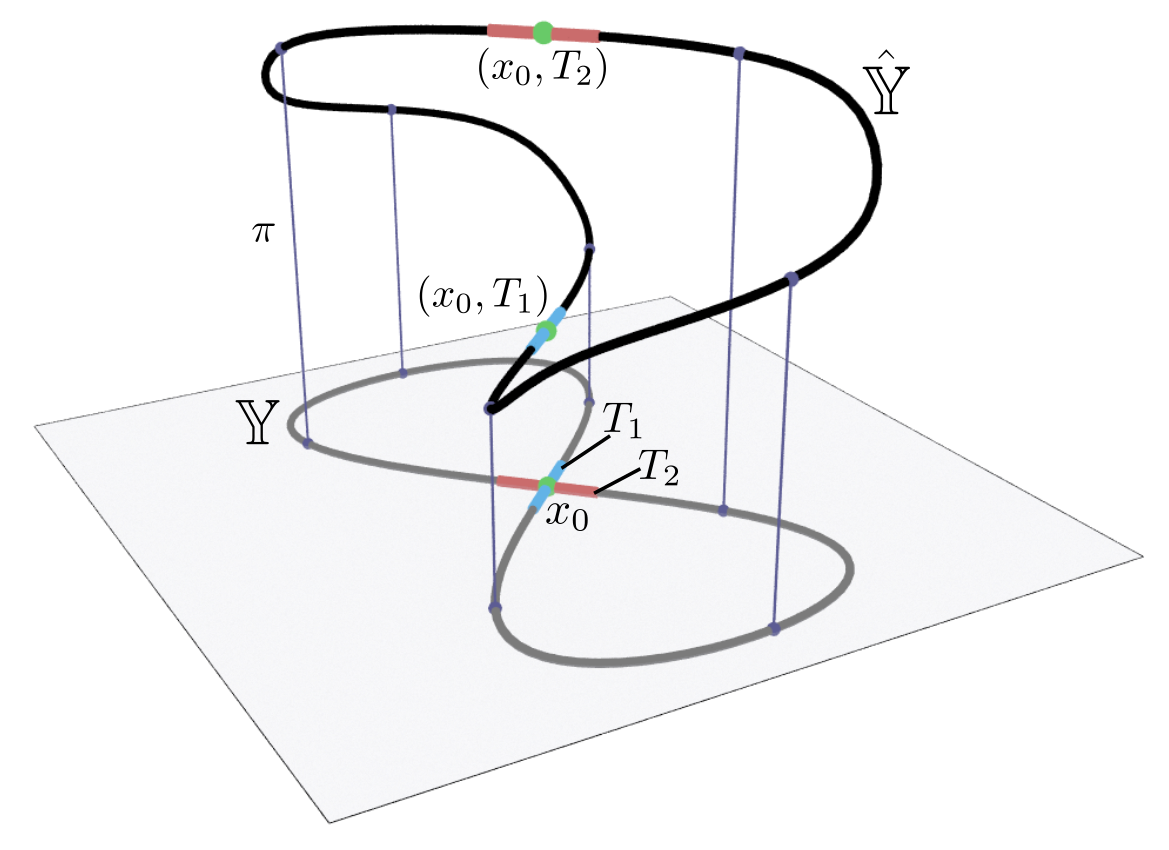}
  \caption{\label{fig:blowup-intuition} A schematic cartoon illustrating the tangent blow-up of a figure-eight curve. In the base space, the curve crosses itself at $x_0$. After lifting to the product space $\mathbb{R}^n \times \mathrm{Gr}(d,n)$, the singular point is replaced by two lifted points $(x_0, T_1 {\nearrow})$ and $(x_0, T_2 {\searrow})$. The projection $\pi$ (defined in \S\ref{sec:background}) is an isomorphism everywhere away from the set of singular points.}
\end{figure}


\subsection{Continuous Construction}\label{sec:continuous}
We begin by making our continuous construction precise. We first define the class of singular spaces we assume our point clouds are sampled from, then introduce the Grassmannian of tangent planes together with a suitable computational representation for its elements via \emph{projector} matrices, and finally define the Nash blow-up.
Because every object defined here admits a concrete matrix representation, the blow-up procedure ultimately reduces to concatenating position coordinates with the vectorized projector matrices.
This allows for simplified linear algebra and avoids the need for orienting global normal fields.

\paragraph*{Stratified spaces.} Recall that a smooth $d$-dimensional manifold is a space that locally resembles $\mathbb{R}^d$ at every point~\cite{lee2003smooth}. Stratified spaces generalize this definition to address singular spaces. Informally, a stratified space $\mathbb{Y}$ is a topological space that can be decomposed into manifold pieces meeting along lower-dimensional singular sets (see Figure~\ref{fig:manifold-vs-stratified}); we refer the reader to Goresky and MacPherson~\cite{goresky1988stratified} or Friedman~\cite{friedman2020singular} for formal treatments.
This is a natural setting for geometry processing since many ``real-world singularities,'' such as intersecting components of CAD models, produce this structure. We refer to the union of $d$-dimensional manifold pieces as the \emph{regular part} $\mathbb{Y}_{\mathrm{reg}}$
and assume each manifold piece of $\mathbb Y_{\mathrm{reg}}$ is smooth.

\begin{figure}[htb]
  \centering
  \includegraphics[width=\linewidth]{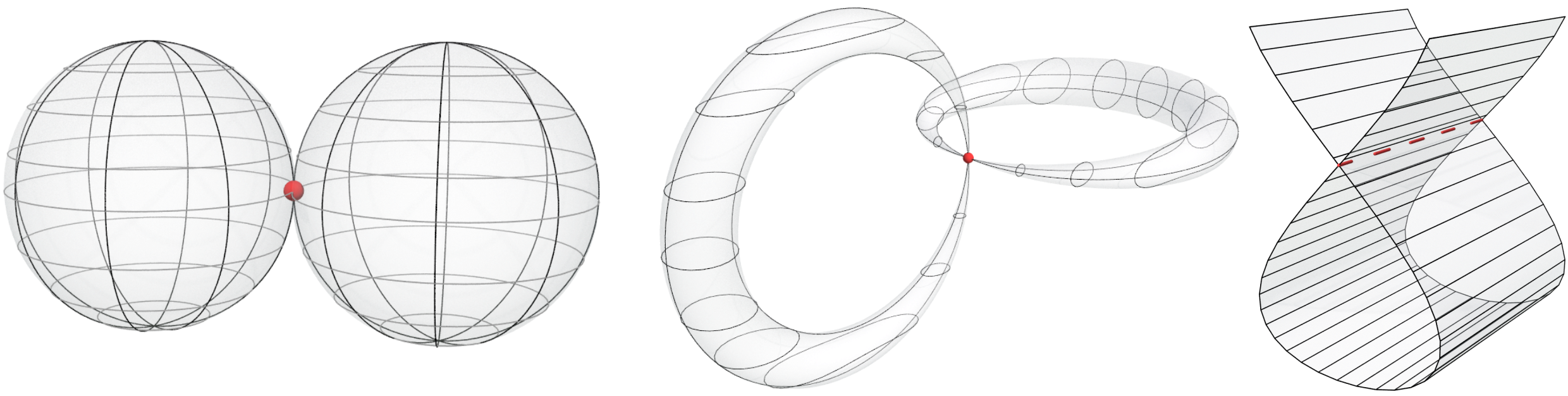}
  \caption{\label{fig:manifold-vs-stratified} Examples of stratified spaces.}
\end{figure}

\paragraph*{The Grassmannian and its projector representation.}
To represent and compute over (point, tangent) pairs, we need a space that parametrizes all possible tangent planes.
This is the \emph{Grassmannian} $\mathrm{Gr}(d,n)$, the compact smooth manifold of $d$-dimensional linear subspaces of $\mathbb{R}^n$~\cite{lee2003smooth}.
The Grassmannian admits multiple concrete computational representations~\cite{bendokat2024grassmann}.
We opt to represent each element of $\mathrm{Gr}(d,n)$ by its orthogonal projector $P = UU^\top \in \mathbb{R}^{n \times n}$, where $U \in \mathbb{R}^{n \times d}$ is any orthonormal basis of the subspace. The projector is independent of the choice of basis, and allows one to map $\mathrm{Gr}(d,n)$ injectively into the space of symmetric matrices. We also define the vectorization operator $\mathrm{vec} : \mathbb{R}^{n \times n} \to \mathbb{R}^{n^2}$, which stacks the columns of a matrix into a single vector. 
With this, distances between subspaces become Frobenius distances between projector matrices, which after vectorization become ordinary Euclidean distances~\cite{golub2013matrix}.

\paragraph*{The Nash blow-up.} With the Grassmannian and its projector representation, we can now define the Nash blow-up. The idea is to pair each non-singular point with its tangent plane and take the closure to fill in the singular points.

Let $\mathbb{Y}$ be a stratified space.
On the regular part, each point $x \in \mathbb{Y}_{\mathrm{reg}}$ has a unique tangent $d$-plane $T_x \mathbb{Y} \in \mathrm{Gr}(d,n)$. The \emph{generalized Gauss map} $$ \nu : \mathbb{Y}_{\mathrm{reg}} \longrightarrow \mathbb{R}^n \times \mathrm{Gr}(d,n), \qquad \nu(x) = (x, T_x \mathbb{Y}), $$ sends each regular point to the pair of its position and its tangent plane, embedding $\mathbb{Y}_{\mathrm{reg}}$ into the product of the ambient space with the Grassmannian (see Figure~\ref{fig:generalized-gauss}). The \emph{Nash blow-up} of $\mathbb{Y}$ is the \emph{closure} of this image, which introduces limit points of convergent sequences on $\mathbb Y$:
$$
\widehat{\mathbb{Y}} \;=\; \overline{\bigl\{\,(x,\, T_x \mathbb{Y}) \;\big|\; x \in \mathbb{Y}_{\mathrm{reg}}\,\bigr\}} \;\subset\; \mathbb{R}^n \times \mathrm{Gr}(d,n).
$$
Over the regular part, $\widehat{\mathbb{Y}}$ is the graph of the tangent-plane assignment.
Over the set of singular 
points, however, something more interesting happens when we take the closure: by definition, a point $(x_0, P) \in \widehat{\mathbb{Y}}$ with $x_0 \in \Sigma$ exists if and only if $P$ is the limit of tangent planes along some sequence of regular points converging to $x_0$. In other words, the closure fills in \emph{limiting} tangent planes at each singularity. Returning to the figure-eight example (Figure \ref{fig:blowup-intuition}), the crossing point $x_0$ acquires two lifted copies $(x_0, T_1)$ and $(x_0, T_2)$.

The \emph{projection} $\pi : \widehat{\mathbb{Y}} \to \mathbb{Y}$, $\pi(x,P) = x$, maps the blown-up space back to the original by ``forgetting'' the tangent plane. This map is continuous, surjective, and restricts to a homeomorphism over $\mathbb{Y}_{\mathrm{reg}}$, so the blow-up modifies $\mathbb{Y}$ only over the singular points.
The preimage $\pi^{-1}(x_0)$ of a singular point is its \emph{exceptional fiber}. Each point in the exceptional fiber corresponds to a distinct limiting tangent plane; we call these branches \emph{sheets}. For example, the figure-eight has two sheets at its crossing point, each carrying its own tangent.

\begin{figure}[htb]
  \centering
  \includegraphics[width=\linewidth]{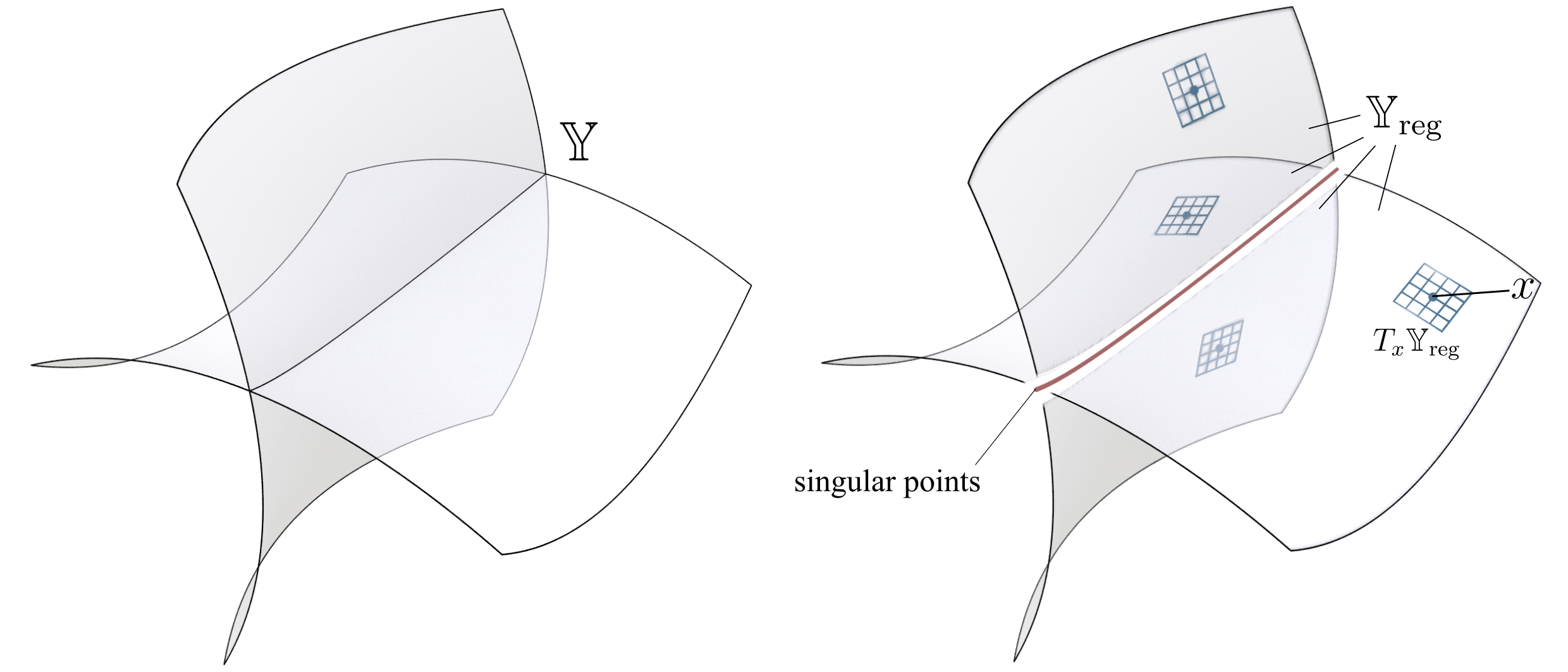}
  \caption{\label{fig:generalized-gauss} Left: A stratified space consisting of four smooth manifold sheets intersecting along singular points. Right: The space $\mathbb{Y}$ is decomposed into its regular part $\mathbb{Y}_{\mathrm{reg}}$ and singular set of points (shown in red). Each smooth sheet is then lifted via the generalized Gauss map, associating to each regular point its tangent space.}
\end{figure}

\subsection{Iterated Blow-Ups}\label{sec:lifted-tangents}
The blow-up, as presented thus far, only separates transversely intersecting points with distinct tangent planes.
This construction can be extended beyond transverse intersections to resolve higher-order contact phenomena, including osculating structures, under suitable conditions.
The extension is achieved by iterating the blow-up, lifting successively from the previously lifted space.
While transverse intersections are common in, e.g., messy geometry, iterated blow-ups are practically useful for separating out point clouds of multiple touching or proximate objects (see Figure~\ref{fig:teaser}).

\paragraph*{Projector derivative.}\label{sec:projector-derivative}
Iterating upon the first blow-up necessitates formally defining tangents at each lifted point $(x, P)$.
Because our blow-up is a product space, we know that the tangent space at $(x, P)$ decomposes into a direct sum of the tangent space at $x \in \mathbb R^n$, the so-called \emph{horizontal} component, and the tangent space at $P \in \mathrm{Gr}(d,n)$, the \emph{vertical} component:
$$
T_{(x, P)}\left(\mathbb{R}^n \times \operatorname{Gr}(d, n)\right)=\underbrace{T_x \mathbb{R}^n}_{\text {horizontal}} \oplus \underbrace{T_P \operatorname{Gr}(d, n)}_{\text {vertical }}.
$$

The projector field $x \mapsto P$ assigns to each $x \in \mathbb{Y}_{\mathrm{reg}} \subset \mathbb{R}^n$ the orthogonal projector onto the tangent plane $T_x \mathbb{Y}$.
For a curve $c(t)$ on $\mathbb{Y}_{\mathrm{reg}}$ with $c(0)=x$ and $c'(0)=v \in T_x\mathbb{Y}$, the lifted curve $t \mapsto (c(t), P(c(t)))$ has tangent
\[
\frac{d}{dt}(c(t), P(c(t)))\bigg|_{t=0}
= \big(v, \nabla_v P\big),
\]
where $\mathcal{P}_v \coloneqq \nabla_v P \in T_P \mathrm{Gr}(d,n)$ denotes the vertical component of the lifted tangent space.
Differentiating the idempotency condition
$P^2 = P$ along $v$ gives $\mathcal{P}_v P + P\mathcal{P}_v = \mathcal{P}_v$, from which it follows that
$$
P\,\mathcal{P}_v\,P = 0
\qquad\text{and}\qquad
(I - P)\,\mathcal{P}_v\,(I - P) = 0.
$$
This means $\mathcal{P}_v$ has no tangent-to-tangent or normal-to-normal component, so it maps tangent vectors into the normal space and normal vectors into the tangent space. 
Indeed, any $n \times n$ matrix satisfying these two conditions is a tangent vector at $P \in \mathrm{Gr}(d,n)$~\cite{boumal2023introduction, bendokat2024grassmann}.


\paragraph*{Curvature.} 
As a point moves along the surface, its tangent plane rotates, and the rate of this rotation encodes curvature. 
The projector derivative is therefore naturally related to the curvature of $\mathbb{Y}_{\mathrm{reg}}$.
Specifically, it represents the normal component of the differential change in tangent vector $w$ in direction $v$, i.e., it is the same as the vector-valued second fundamental form, $\mathrm{II}(v, w) = \mathcal{P}_v(w)$, which is a generalization of the second fundamental form to arbitrary codimension~\cite{robbin2022introduction} (see Figure~\ref{fig:projector-derivative}).
The \emph{scalar} valued second fundamental form, more commonly used in geometry processing, picks a unit normal $\ell$ and returns a number $\mathrm{II}(v, w)=\left\langle \mathcal{P}_v(w), \ell\right\rangle \in \mathbb{R}$ rather than a vector in the normal space.

The projector derivative $\mathcal{P}_v$ can be represented as an $n \times n$ matrix, but its subspace structure (tangent-to-normal and normal-to-tangent only) means it is fully determined by the $(n{-}d) \times d$ matrix
$$
  B_x(v) \;=\; N^\top \mathcal{P}_v\, U
    \;\in\; \mathbb{R}^{(n-d) \times d},
$$
where $U \in \mathbb{R}^{n \times d}$ and $N \in \mathbb{R}^{n \times (n-d)}$ are orthonormal bases for the tangent and normal spaces at
$x$~\cite{bendokat2024grassmann}.
Each entry $B_x(v)_{\ell k}$ records how much the $k$-th tangent basis vector rotates into the $\ell$-th normal direction when the basepoint moves along $v$. The (vector valued) second fundamental form is recovered by mapping these coordinates back to ambient space: $\mathrm{II}(v, w) = N\, B_x(v)\,(U^\top w)$.

For hypersurfaces ($n - d = 1$), the normal space is one-dimensional and $B_x(v)$ reduces to a row vector for any tangent direction $v$. Stacking the rows $B_x(e_1), \ldots, B_x(e_d)$ retrieves the shape operator, whose eigenvalues are the principal curvatures $\kappa_1, \ldots, \kappa_d$.

\paragraph*{Computation.}

Computing $\mathcal{P}_v \in \mathbb{R}^{n \times n}$ directly requires differentiating the projector field. In Section~\ref{sec:iterated-blowup} we show how to estimate the compact representation $B_x$ from a point
cloud by linear regression, after which the full projector derivative can be reconstructed as
$$
  \mathcal{P}_v
    = N\,B_x(v)\,U^\top + U\,B_x(v)^\top N^\top.
$$
While $B_x(v)$ depends on the choice of tangent and normal bases $U$ and $N$, the reconstructed $\mathcal{P}_v$ does not~\cite{bendokat2024grassmann}.

\begin{figure}[htb]
  \centering
  \includegraphics[width=.8\linewidth]{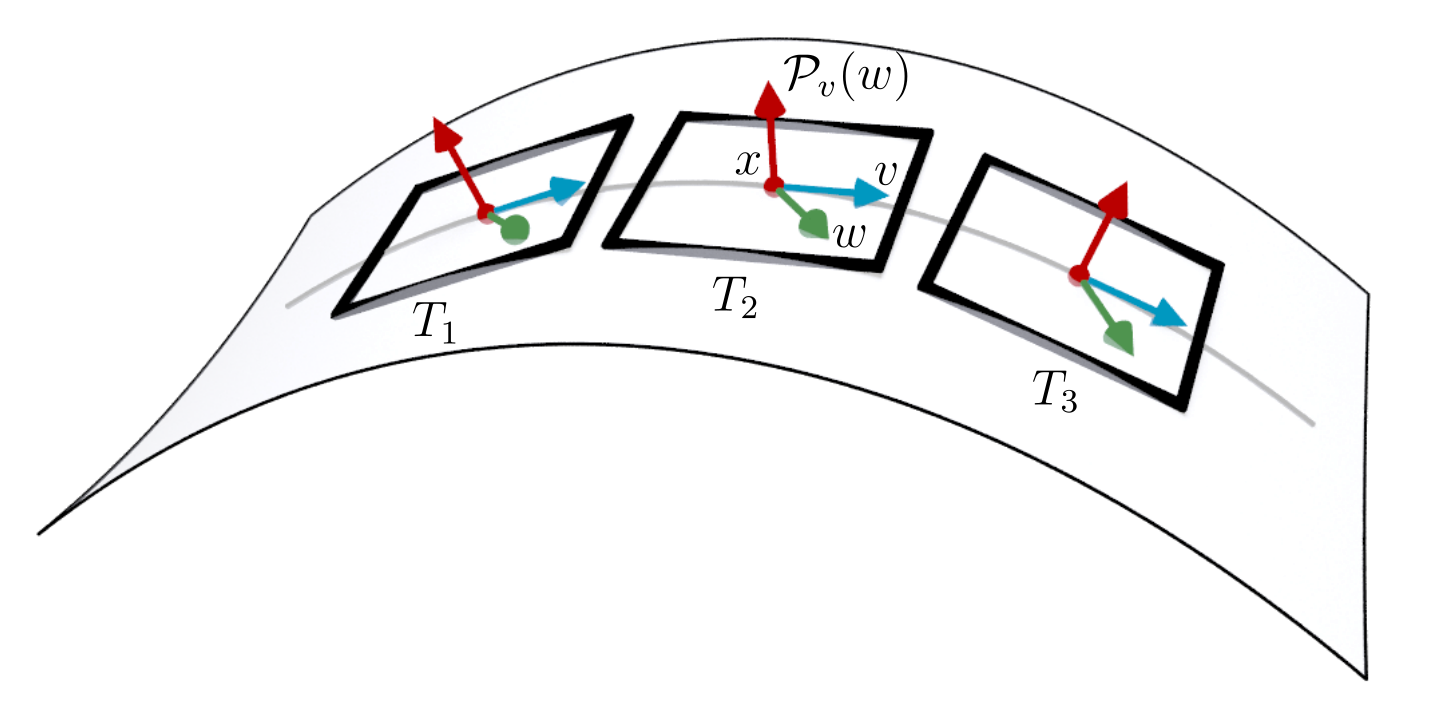}
  \caption{\label{fig:projector-derivative}The projector derivative on a smooth surface. As the basepoint moves along the curve with tangent $v$ (blue arrow), the tangent plane rotates from $T_1$ to $T_2$ to $T_3$. At $x_1$, the projector derivative $\mathcal{P}_v(w)$ (red arrow) measures the component of the tangent vector $w$ (green arrow) that leaves the tangent plane, which is the vector valued second fundamental form $\mathrm{II}(v, w)$.
  }
\end{figure}\label{sec:background}

\section{The Discrete Tangent Blow-Up}
We now present our main contribution: 
a discrete representation that makes the blow-up algorithmic and applicable to point cloud data. We begin by defining a metric on the product space $\mathbb{R}^n \times \mathrm{Gr}(d,n)$ and its Euclidean embedding, then describe the discrete lifting algorithm that takes a set of points and tangents as input and produces a point cloud  with resolved singularities in the lifted space as output.

\paragraph*{Lifted metric.}\label{sec:lifted-metric}
All discretized operators on the lifted point cloud, such as the Laplacian (\S\ref{sec:lifted-laplacian}), gradient (\S\ref{sec:lifted-grad-div}), and the second fundamental form regression (\S\ref{sec:iterated-blowup}), require a notion of neighborhood and proximity. To perform nearest-neighbor search and kernel evaluation in the product space $\mathcal{L} = \mathbb{R}^n \times \mathrm{Gr}(d,n)$, we define a metric that combines spatial and tangential components. The \emph{chordal distance} on the Grassmannian,
\[
  d_{\mathrm{chord}}(U,V) \;=\; \tfrac{1}{\sqrt{2}}\,\|P_U - P_V\|_F \;=\; \Bigl(\textstyle\sum_{i=1}^d \sin^2\theta_i\Bigr)^{1/2},
\]
where $\theta_1, \ldots, \theta_d$ are the principal angles between $U, V \in \mathrm{Gr}(d,n)$, is computationally efficient, and its square is smooth everywhere~\cite{conway1996packing}.
Furthermore, it approximates the geodesic distance on $\mathrm{Gr}(d,n)$ with cubic error as distances approach zero, $d_{\mathrm{chord}} = d_{\mathrm{geo}} + O(d_{\mathrm{geo}}^3)$, as follows directly from the Taylor expansion of the principal-angle characterizations of both metrics~\cite{edelman1998geometry, bendokat2024grassmann}.

We define the \emph{lifted metric} on $\mathbb{R}^n \times \mathrm{Gr}(d,n)$ as the weighted product
\begin{equation}\label{eq:dist}
  d_{\mathcal{L}}^2\bigl((p,U),\,(q,V)\bigr)
  \;=\; \|p - q\|^2 \;+\; \tfrac{\alpha}{2}\,\|P_U - P_V\|_F^2,
\end{equation}
where $\alpha > 0$ is a parameter balancing the spatial and angular parts.
Since the lifted metric measures both spatial displacement and tangent-plane variation, $d_{\mathcal L}$ changes distances by stretching regions where curvature causes tangent planes to vary rapidly. Geodesics and differential quantities computed after lifting should therefore be interpreted as quantities of the lifted, curvature-sensitive geometry, even for a smooth embedded manifold.

\paragraph*{Isometric Euclidean embedding.} 
A useful computational property of the chordal product metric is that it admits an isometric embedding into Euclidean space. Define the embedding map
\[
  \Phi : \mathbb{R}^n \times \mathrm{Gr}(d,n) \to \mathbb{R}^{n + n^2}, \qquad
  \Phi(p,\,U) = \bigl(\,p,\;\; \sqrt{\smash[b]{\alpha/2}}\;\mathrm{vec}(P_U)\,\bigr),
\]
where $\mathrm{vec}(\cdot)$ denotes vectorization. Then $\|\Phi(p,U) - \Phi(q,V)\|^2 = d_{\mathcal{L}}^2((p,U),(q,V))$. Thus, Euclidean distance in the embedding space exactly reproduces the lifted metric.
Consequently, computing neighborhoods and kernels in the lifted metric reduces to applying ordinary Euclidean algorithms to the coordinates $\Phi(p,U)$.

\subsection{Algorithm}\label{sec:discrete-algorithm}

\paragraph*{Input.}
The input to our discrete blow-up algorithm is a point cloud equipped with tangent planes: positions $\{x_i\}_{i=1}^N \subset \mathbb{R}^n$, orthonormal tangent frames $\{U_i\}_{i=1}^N$ 
with $U_i \in \mathbb{R}^{n \times d}$ spanning the tangent $d$-plane at point $x_i$ with index $i$, and a weight parameter $\alpha > 0$ used in our metric in \eqref{eq:dist}. For surfaces in $\mathbb{R}^3$, a tangent plane is equivalent to a normal vector, though no globally consistent orientation is required. For curves, a unit tangent vector suffices. We assume that tangent planes are provided; the practical problem of tangent estimation near singularities is discussed in \S\ref{sec:discussion}.

\paragraph*{The lift.}
For each sample, we compute the tangent-plane projector $P_i = U_iU_i^\top \in \mathbb{R}^{n \times n}$, the orthogonal projection onto $\mathrm{col}(U_i)$. The projector depends only on the subspace spanned by $U_i$, not on the choice of basis. Consequently any quantity derived from $P$, such as distances between projectors and derivatives of the projector field, is also coordinate-free.
We then apply the isometric embedding:
$$
  \Phi_i \;=\; \Bigl(\,x_i,\;\; \sqrt{\alpha/2}\;\,\mathrm{vec}(P_i)\,\Bigr) \;\in\; \mathbb{R}^{n + n^2}.
$$
The output is a Euclidean point cloud $\{\Phi_1, \ldots, \Phi_N\} \subset \mathbb{R}^{n+n^2}$ on which Euclidean distances reproduce the lifted metric by construction:
$$
\|\Phi_i - \Phi_j\|^2 \;=\; \|x_i - x_j\|^2 + \tfrac{\alpha}{2}\,\|P_i - P_j\|_F^2 \;=\; d_{\mathcal{L}}^2\bigl((x_i, P_i)\,, (x_j, P_j)\bigr).
$$
The coordinate transformation is applied uniformly to every sample point; no knowledge of singular points or any global topology is required.

To understand how the lifting separates intersecting sheets, consider two points $x_i$ and $x_j$ near a transverse self-intersection. In the ambient space $\mathbb{R}^n$, these points may be arbitrarily close. In the lifted space, however, the projector term $\tfrac{\alpha}{2}\|P_i - P_j\|_F^2$ contributes a penalty proportional to the angular gap between their tangent planes. Larger $\alpha$ widens the gap at the cost of stretching distances along curved regions where the tangent plane varies rapidly. Figure~\ref{fig:fig8-coordinates} illustrates this effect on a self-intersecting figure-eight curve, where the projector entries vary smoothly along each branch but differ sharply at the crossing; we offer a formal treatment in \S\ref{sec:theoretical-analysis}.

\begin{figure}[htb]
  \centering
  \includegraphics[width=.8\linewidth]{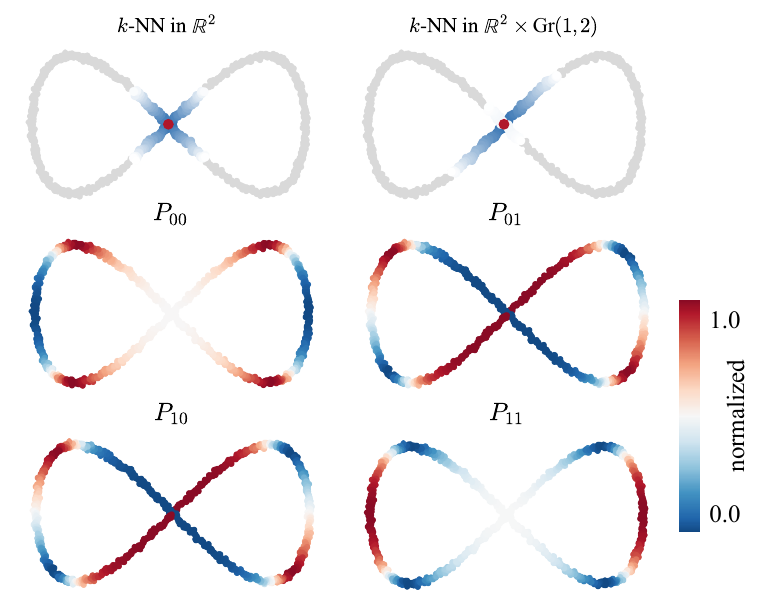}
  \caption{\label{fig:fig8-coordinates}The discrete tangent blow-up of a self-intersecting figure-eight curve. Top row: $k$-nearest neighbors of a query point (red), computed in the ambient space $\mathbb{R}^2$ (left) and in the lifted embedding $\mathbb{R}^2 \times \mathrm{Gr}(1,2)$ (right). Bottom rows: entries of the tangent projector $P \in \mathrm{Gr}(1,2)$, shown component-wise over the curve. Points at the crossing share spatial position but have distinct tangents, yielding different values of $P$.}
\end{figure}

\begin{remark}[The exceptional fiber in the discrete setting.] At a singular point $x_0$, the continuous blow-up $\widehat{\mathbb{Y}}$ has a well-defined exceptional fiber (defined in \S\ref{sec:background}). In the discrete setting, we do not explicitly compute a closure, but the fiber structure is inherited by the samples. A point $x_i$ sampled near a singularity $x_0$ on a sheet $M$ carries a tangent estimate $P_i \approx T_{x_0} M$, so its lifted image $\Phi_i$ lies near the fiber point $(x_0, T_{x_0} M)$. As the sampling density increases, the lifted point cloud approximates this structure without any explicit closure.
\end{remark}

\subsection{The Discrete Iterated Blow-Up}\label{sec:iterated-blowup}

The first lift separates sheets with distinct tangent planes. However, multiple sheets can share the same tangent at a point and still be geometrically distinct. For example, a line and a parabola meeting tangentially at the origin have the same position and tangent, but different curvature (see Figure~\ref{fig:second-blowup}). 
To resolve such tangential intersections, we need to capture how the tangent plane varies from point to point.

The discrete blow-up can be applied iteratively: the output of the first lift is a Euclidean point cloud $\{\Phi_i\}$, and the tangent projector at each lifted point is itself a point on a new Grassmannian.
Hence, we can define a new product space to which the entire blow-up machinery applies again. 
To perform this second lift, we need the tangent planes of the lifted product space $\mathbb{R}^n \times \mathrm{Gr}(d,n)$, which requires estimating how $P_i$ varies across the point cloud (recall \S\ref{sec:lifted-tangents}). This variation is encoded by the vector valued second fundamental form.

\paragraph*{Discrete estimation of the second fundamental form.}
Although the lifted point cloud lives in $\mathbb{R}^{n+n^2}$, the second fundamental form can be estimated from the original data $(x_i, U_i)$ alone, without performing PCA or tangent estimation in the high-dimensional lifted space. This is because the vertical component of the product tangent vector is determined entirely by the projector derivative, which depends only on $n$-dimensional positions and $d$-dimensional tangent frames.

Broadly speaking, we approximate the vector valued second fundamental form $\mathcal{P}_v$ at each sample $x_i$ by finite-differencing the projector $P_i$ against neighboring projectors, and recovering a discrete estimate of $B_i \approx B_{x_i}$ by linear regression.
Thus, the procedure to estimate $B_i$ is as follows. 
For each point $(x_i, U_i)$, fix an orthonormal basis $N_i \in \mathbb{R}^{n \times (n-d)}$ for $\mathrm{col}(U_i)^\perp$. Throughout, $a, b = 1, \ldots, d$ index tangent directions with $e_a$ the standard basis of $\mathbb{R}^d$, and $\ell = 1, \ldots, n-d$ indexes normal directions. Then:
\begin{enumerate}
  \item Find $k$-nearest neighbors of $\Phi_i$ in the lifted metric.
  \item Compute intrinsic displacements $t_{ij} = U_i^\top(x_j - x_i) \in \mathbb{R}^d$: the spatial offset projected into the tangent plane at $x_i$.
  \item Compute Grassmannian tangent coordinates $C_{ij} = N_i^\top(P_j - P_i)\, U_i \in \mathbb{R}^{(n-d) \times d}$: the projector difference expressed in the tangent and normal bases at $x_i$. For nearby points on the same smooth sheet, $C_{ij} \approx B_{x_i}(t_{ij}) + O(\|x_j - x_i\|^2)$.
  \item Solve the constrained least-squares problem:
  \begin{align*}
    B_i = \operatorname*{argmin}_{B} &\sum_{j \in \mathcal{N}(i)}
    \|C_{ij} - B\, t_{ij}\|_F^2 \\
    \text{subject to} \quad
      &(B(e_a))_{\ell b} = (B(e_b))_{\ell a}
      \quad \text{for each normal direction } \ell.
  \end{align*}
\end{enumerate}
The symmetry constraint reflects the symmetry of the vector valued second fundamental form on smooth manifolds~\cite{lee2003smooth}. 
The resulting $B_i \approx B_{x_i}$ is the discrete analogue of the matrix defined in \S\ref{sec:projector-derivative}, from which the full projector derivative is reconstructed as $\mathcal{P}_v = N_i\,B_i(v)\,U_i^\top + U_i\,B_i(v)^\top N_i^\top$ and all curvature invariants (\S\ref{sec:curvature}) are extracted.
In the hypersurface case ($n - d = 1$), $N_i$ reduces to a single unit normal, $B_i(e_a) \in \mathbb{R}^{1 \times d}$ is a row vector for each tangent direction, and the symmetry constraint makes the $d \times d$ matrix with rows $B_i(e_1), \ldots, B_i(e_d)$ symmetric. Hence, we recover the shape operator.

\paragraph*{Constructing the second blow-up.}\label{sec:second-lift}
Having estimated each $B_i$, we can construct the tangent space of the lifted manifold at each point. For each tangent basis direction $e_a \in \mathbb{R}^d$, the tangent vector to the first lift at $(x_i, P_i)$ has a horizontal component (the spatial velocity $U_i\, e_a$) and a vertical component (the corresponding Grassmannian velocity, encoding how the projector changes when the basepoint moves along $e_a$).

The vertical component is reconstructed from its coordinate representation $B_i(e_a)$ via the tangent space characterization of the Grassmannian~\cite{bendokat2024grassmann}: any tangent vector to $\mathrm{Gr}(d,n)$ at $P_i$ takes the form $N_i B U_i^\top + U_i B^\top N_i^\top$ for some $B \in \mathbb{R}^{(n-d) \times d}$.
\[
  g_a = \Bigl(\, U_i\, e_a,\;\; \sqrt{\alpha/2}\;\, \mathrm{vec}\bigl(N_i\, B_i(e_a)\, U_i^\top + U_i\, B_i(e_a)^\top\, N_i^\top\bigr)\,\Bigr) \;\in\; \mathbb{R}^{n + n^2}.
\]
The vectors $g_1, \ldots, g_d$ span the tangent space of the lifted manifold at $\Phi_i$, but may not in general be orthonormal. 

To obtain a tangent projector, we stack them into $G_i = [g_1 \;\cdots\; g_d] \in \mathbb{R}^{(n+n^2) \times d}$
and form
\[
  Q_i = G_i (G_i^\top G_i)^{-1} G_i^\top \in \mathrm{Gr}(d,\; n + n^2).
\]
Although the intermediate quantities $B_i$, $g_a$, and $G_i$ depend on the choice of bases $U_i$ and $N_i$, the projector $Q_i$ does not. 
This is the same coordinate-independence that holds for $P_i$ itself, as noted in \S\ref{sec:background}.

The second lift applies the blow-up to its own output. After the first lift, each sample carries a position $\Phi_i^{(1)} \in \mathbb{R}^{D_1}$ (with $D_1 = n + n^2$) and a tangent projector $P_i^{(1)} = Q_i \in \mathrm{Gr}(d, D_1)$, computed from the procedure above. The goal of the second lift is to produce a new tangent projector $P_i^{(2)} \in \mathrm{Gr}(d, D_2)$ that encodes how $P_i^{(1)}$ itself varies across the point cloud. The second-lift embedding and its associated tangent projector are constructed by applying the same algorithm to the pair $(\Phi_i^{(1)}, P_i^{(1)})$ to obtain
$$
\Phi_i^{(2)} = \Bigl(\,\Phi_i^{(1)},\;\; \sqrt{\alpha^{(2)}/2}\;,\mathrm{vec}(P_i^{(1)})\,\Bigr) \;\in\; \mathbb{R}^{D_2}, \qquad D_2 = D_1 + D_1^2
$$
where $\alpha^{(2)} > 0$ is the weight parameter for the second level. The second-lift distance is
$$
d^2_{\mathcal{L}_2} = \|\Phi_i^{(1)} - \Phi_j^{(1)}\|^2 + \tfrac{\alpha^{(2)}}{2}\,\|P_i^{(1)} - P_j^{(1)}\|_F^2
$$
which penalizes differences in position, tangent plane, and curvature simultaneously. The procedure is summarized in Algorithm~\ref{alg:iterated-blowup}.

\begin{figure}
  \centering
  \includegraphics[width=\linewidth]{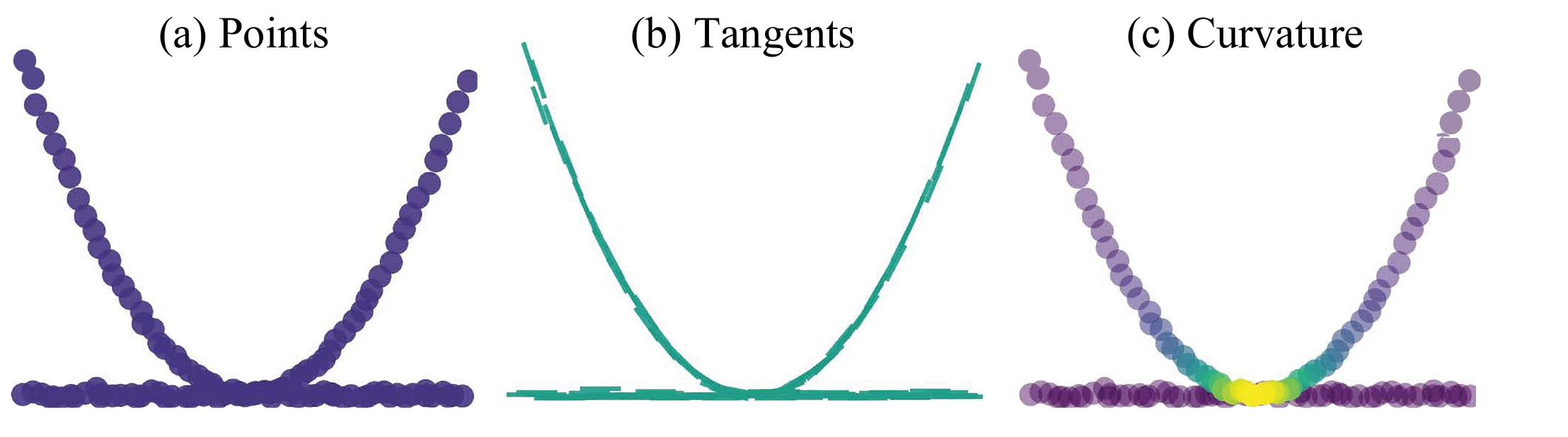}
  \caption{\label{fig:second-blowup}A line and parabola sharing position (left) and tangent (center) at the origin. The Frobenius norm of the vector valued second fundamental form $\|I\!I\|_F$ distinguishes the intersection (right).}
\end{figure}

\begin{algorithm}[htb]
\caption{Discrete iterated blow-up (levels one and two)}
\label{alg:iterated-blowup}
\begin{algorithmic}[1]
\renewcommand{\algorithmicrequire}{\textbf{Input:}}
\renewcommand{\algorithmicensure}{\textbf{Output:}}
\REQUIRE Positions $\{x_i\}_{i=1}^N \subset \mathbb{R}^n$,
  tangent frames $\{U_i\}_{i=1}^N$, weights
  $\alpha^{(1)}, \alpha^{(2)} > 0$, neighbors~$k$
\ENSURE Level-two embedding
  $\{\Phi_i^{(2)}\}_{i=1}^N \subset \mathbb{R}^{D_2}$,
  tangent projectors
  $\{P_i^{(1)}\}_{i=1}^N$,
  $\{P_i^{(2)}\}_{i=1}^N$
\medskip
\STATE \textbf{--- Level-one lift: embed into
  $\mathbb{R}^{D_1}$, $D_1 = n + n^2$ ---}
\FOR{$i = 1, \ldots, N$}
  \STATE $P_i^{(0)} \gets U_i\,U_i^\top$
    \hfill\COMMENT{tangent projector}
  \STATE $\Phi_i^{(1)} \gets
    \bigl(x_i,\;\sqrt{\alpha^{(1)}/2}\;
    \mathrm{vec}(P_i^{(0)})\bigr)$
    \hfill\COMMENT{isometric embedding}
\ENDFOR
\medskip
\STATE \textbf{--- Curvature estimation and tangent projectors ---}
\FOR{$i = 1, \ldots, N$}
  \STATE $N_i \gets \mathrm{orth}(\ker(P_i^{(0)}))$
    \hfill\COMMENT{normal basis}
  \STATE $\mathcal{N}(i) \gets k\text{-NN of } \Phi_i^{(1)}
    \text{ in } \mathbb{R}^{D_1}$
    \hfill\COMMENT{lifted neighbors}
  \FOR{$j \in \mathcal{N}(i)$}
    \STATE $t_{ij} \gets U_i^\top(x_j - x_i)$
      \hfill\COMMENT{tangent-plane displacement}
    \STATE $C_{ij} \gets N_i^\top(P_j^{(0)} - P_i^{(0)})\,U_i$
      \hfill\COMMENT{projector variation}
  \ENDFOR
  \STATE $B_i^{(1)} \gets \operatorname*{argmin}_{B:\,
    B_\ell^\top = B_\ell}
    \sum_j \|C_{ij} - B\,t_{ij}\|_F^2$
    \hfill\COMMENT{fit second fundamental form}
  \FOR{$a = 1, \ldots, d$}
    \STATE $(\mathcal{P}_{e_a})_i \gets N_i\,B_i^{(1)}(e_a)\,U_i^\top
      + U_i\,B_i^{(1)}(e_a)^\top N_i^\top$
      \hfill\COMMENT{Grassmannian velocity}
    \STATE $g_a \gets \bigl(U_i\,e_a,\;
      \sqrt{\alpha^{(1)}/2}\;\mathrm{vec}(\Delta_a)\bigr)$
      \hfill\COMMENT{product tangent vector}
  \ENDFOR
  \STATE $G_i \gets [g_1 \;\cdots\; g_d]$
  \STATE $P_i^{(1)} \gets G_i\,(G_i^\top G_i)^{-1}\,G_i^\top$
    \hfill\COMMENT{tangent projector of the lifted manifold}
\ENDFOR
\medskip
\STATE \textbf{--- Level-two lift: embed into
  $\mathbb{R}^{D_2}$, $D_2 = D_1 + D_1^2$ ---}
\FOR{$i = 1, \ldots, N$}
  \STATE $\Phi_i^{(2)} \gets
    \bigl(\Phi_i^{(1)},\;
    \sqrt{\alpha^{(2)}/2}\;
    \mathrm{vec}(P_i^{(1)})\bigr)$
    \hfill\COMMENT{same structure as level one}
\ENDFOR
\end{algorithmic}
\end{algorithm}

\begin{remark}[Higher-order iteration] In principle, the discrete blow-up construction iterates indefinitely. Each level $\ell$ produces a new Euclidean point cloud with tangent projectors, and each successive level separates points that agree up to one higher order of contact: level zero distinguishes positions, level one distinguishes tangent planes, level two distinguishes curvatures, and so on. In practice, however, each iteration estimates a higher-order information from finite noisy samples, and the accuracy of these estimates likely degrades with order. We conjecture that singularities of contact order $r$ (sheets sharing the same ``$r$-jet'' 
but differing at order $r+1$) are resolved at exactly level $r$. 
In this paper, we restrict attention to levels 0 and 1, which handle transverse and tangential intersections respectively.
Our method is designed for point clouds that are sampled from a finite union of smooth sheets and thus is not a general-purpose resolution method for arbitrary non-smooth sets.
A systematic study of the relationship between iterated blow-ups, jet spaces, and higher-order differential invariants such as torsion is left to future work.
\end{remark}

\section{Theoretical Analysis}\label{sec:theoretical-analysis} 

Assuming all intersections are transverse and singularities are thus resolved after one lift, we now consider whether the tangent blow-up achieves two properties needed for geometry processing. First, \emph{separation}: points on distinct sheets that are spatially close must become metrically well-separated after lifting. Second, \emph{smoothness}: the lifted image of a smooth manifold must itself be smooth.

We establish both properties in the continuous setting.  Since we assume that the discrete construction is sampled from this continuous space, these results help justify the computational pipeline. They ensure that, after lifting, each sheet is a smooth submanifold of Euclidean space with positive separation from other sheets, which is the setting where standard convergence results for point cloud Laplacians~\cite{belkin2006convergence,
belkin2009constructing} are formulated.

Our first guarantee is that the lifted metric admits a lower bound between distinct transversely intersecting sheets, even as their spatial distance tends to zero.

\begin{theorem}[Lifted Separation]\label{thm:separation}
Let $M_1, M_2 \subset \mathbb{R}^n$ be smooth $d$-dimensional
submanifolds intersecting at a point $x_0$ with distinct tangent planes.
Write $P_{x_0,M_a}$ for the tangent projector of sheet~$M_a$
at~$x_0$, and let
\[
  \delta \;=\; d_{\mathrm{chord}}\!\bigl(T_{x_0}M_1,\;T_{x_0}M_2\bigr)
  \;>\; 0
\]
be the chordal distance.  Then there exists $\varepsilon_0 > 0$
(depending on~$\delta$ and the curvatures of $M_1,M_2$ near~$x_0$)
such that for all $p \in M_1$, $q \in M_2$ with
$\|p - x_0\|,\,\|q - x_0\| < \varepsilon_0$,
\[
  d_{\mathcal{L}}\!\bigl(\tilde{p},\,\tilde{q}\bigr)
  \;\geq\; \frac{1}{\sqrt{2}}\,\sqrt{\alpha}\;\delta
  \;>\; 0,
\]
where $\tilde{p} = (p,\,T_pM_1)$ and $\tilde{q} = (q,\,T_qM_2)$
are the lifted points and~$d_{\mathcal{L}}$ is the lifted metric
of~\S\ref{sec:lifted-metric}.
\end{theorem}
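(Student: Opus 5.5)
The plan is to discard the spatial term of the lifted metric and reduce the claim to a continuity estimate for the tangent-plane field on each sheet. By \eqref{eq:dist},
\[
d_{\mathcal{L}}^2(\tilde p,\tilde q) \;\ge\; \tfrac{\alpha}{2}\,\|P_{T_pM_1}-P_{T_qM_2}\|_F^2 \;=\; \alpha\, d_{\mathrm{chord}}\bigl(T_pM_1,\,T_qM_2\bigr)^2,
\]
because $d_{\mathrm{chord}} = \tfrac{1}{\sqrt2}\|\cdot\|_F$. It therefore suffices to show that
\[
d_{\mathrm{chord}}\bigl(T_pM_1,\,T_qM_2\bigr) \;\ge\; \delta/\sqrt{2}
\]
for all $p,q$ in a small enough neighborhood of $x_0$. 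The factor $1/\sqrt2$ in the statement is slack: it absorbs the drift of the tangent planes away from their values at $x_0$.

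The chordal distance is a genuine metric, being a multiple of the Frobenius distance between projectors. The triangle inequality then gives
\[
d_{\mathrm{chord}}(T_pM_1,T_qM_2) \;\ge\; \delta \;-\; d_{\mathrm{chord}}(T_pM_1,T_{x_0}M_1) \;-\; d_{\mathrm{chord}}(T_qM_2,T_{x_0}M_2).
\]
The goal is to make each of the two error terms at most $\tfrac12(1-\tfrac{1}{\sqrt2})\delta$.

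The main step is a quantitative Lipschitz bound for the Gauss map $x\mapsto P_x$ on each sheet, and I would use the projector derivative of \S\ref{sec:projector-derivative} for this. Take a unit-speed curve $c$ in $M_a$ from $x_0$ to $p$. Then
\[
\|P_p-P_{x_0}\|_F \;\le\; \int \|\mathcal{P}_{c'}\|_F \;\le\; \sqrt{2}\,K_a\,L(c),
\]
where $K_a$ bounds the operator norm of the second fundamental form near $x_0$. This uses $\mathcal{P}_v = N B(v) U^\top + U B(v)^\top N^\top$ and $\|B(v)\|_F\le\sqrt d\,K_a\|v\|$; the constant only needs to be explicit up to dimension factors.

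The obstacle is converting the hypothesis $\|p-x_0\|<\varepsilon_0$, which is an ambient distance, into a bound on intrinsic length. An ambient ball could in principle meet parts of $M_a$ that are far away along the manifold. I would handle this by restricting to the local graph chart: near $x_0$, $M_a$ is the graph of a smooth map $f_a : T_{x_0}M_a \to (T_{x_0}M_a)^\perp$ with $f_a(0)=0$ and $Df_a(0)=0$. Interpreting the sheet locally as this graph, the straight segment in the chart lifts to a curve of length at most $(1+O(K_a\varepsilon_0))\|p-x_0\|\le 2\|p-x_0\|$ for small $\varepsilon_0$. This is also where the dependence of $\varepsilon_0$ on the curvatures enters.

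To finish, choose
\[
\varepsilon_0 \;\le\; \frac{(1-1/\sqrt2)\,\delta}{4\sqrt2\,\max(K_1,K_2)},
\]
intersected with the radius of the graph charts. Both error terms are then at most $\tfrac12(1-\tfrac1{\sqrt2})\delta$, so $d_{\mathrm{chord}}(T_pM_1,T_qM_2)\ge\delta/\sqrt2$, and hence $d_{\mathcal{L}}(\tilde p,\tilde q)\ge \sqrt{\alpha}\,\delta/\sqrt2$.
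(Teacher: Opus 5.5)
Your proposal is correct and follows the paper's proof essentially step for step: drop the spatial term, apply the triangle inequality to the tangent projectors, bound the drift of each Gauss map via $\|\mathcal{P}_v\|_F=\sqrt{2}\,\|B(v)\|_F$ together with $d_{M_a}(x,x_0)\le 2\|x-x_0\|$, and spend the slack $(1-1/\sqrt{2})\,\delta$ (the paper's $(\sqrt{2}-1)\,\delta$ in Frobenius units) on the two error terms. The differences are minor: you justify the intrinsic-versus-ambient comparison with a graph chart, where the paper simply asserts it. Your explicit $\varepsilon_0$ also drops the $\sqrt{d}$ factor from $\|B(v)\|_F\le\sqrt{d}\,K_a\|v\|$; you can avoid this by bounding $\|B(v)\|_F$ directly, as the paper does with $L_k$.
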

The proof is given in Appendix A.

We validate the separation bound empirically in Figure~\ref{fig:separation-theorem}.
For the discrete pipeline, the theorem has the following consequence.

\begin{corollary}[Discrete sheet separation]\label{cor:discrete-separation}
Let $\{(x_i, U_i)\}_{i=1}^N$ be a point cloud sampled from $M_1 \cup M_2$ with correct tangent planes, and let $\Phi_i = \bigl(x_i,\,\sqrt{\alpha/2}\;\mathrm{vec}(P_i)\bigr)$ be the lifted embedding.
By Theorem~\ref{thm:separation}, every pair of lifted points
$\Phi_i$,~$\Phi_j$ with $x_i \in M_1$, $x_j \in M_2$, and
$\|x_i - x_0\|,\,\|x_j - x_0\| < \varepsilon_0$ satisfies
\[
  \|\Phi_i - \Phi_j\|
  \;\geq\;
  \tfrac{1}{\sqrt{2}}\,\sqrt{\alpha}\;\delta.
\]
In particular, if $k$-nearest-neighbor queries in the lifted space use a search radius smaller than
$\frac{1}{\sqrt{2}}\sqrt{\alpha}\,\delta$, no query point on one sheet will return a neighbor from the other sheet within
distance~$\varepsilon_0$ of the singularity.
\end{corollary}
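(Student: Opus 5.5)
The plan is to reduce the corollary to Theorem~\ref{thm:separation} through the isometric Euclidean embedding $\Phi$. No new geometric estimate is needed. The only care required is in matching the discrete data to the continuous lifted points.

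\textbf{Step 1: identify discrete samples with lifted points.} I would first make precise what ``correct tangent planes'' means. Each sample $x_i$ comes with a sheet label $a(i)\in\{1,2\}$ such that $x_i\in M_{a(i)}$ and $\mathrm{col}(U_i)=T_{x_i}M_{a(i)}$. Hence $P_i=U_iU_i^\top=P_{x_i,M_{a(i)}}$, and this does not depend on the choice of orthonormal basis $U_i$. The label matters only at $x_0$ itself, where a sample lies on both sheets. There the label fixes which limiting tangent plane the sample carries, which corresponds to choosing a point of the exceptional fiber. With this convention, for $x_i\in M_1$ and $x_j\in M_2$ the pairs $(x_i,P_i)$ and $(x_j,P_j)$ are exactly the lifted points $\tilde p=(p,T_pM_1)$ and $\tilde q=(q,T_qM_2)$ of the theorem, with $p=x_i$ and $q=x_j$.

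\textbf{Step 2: transfer the bound.} By the isometry identity of \S\ref{sec:lifted-metric},
\[
\|\Phi_i-\Phi_j\|^2=\|x_i-x_j\|^2+\tfrac{\alpha}{2}\|P_i-P_j\|_F^2=d_{\mathcal L}^2(\tilde p,\tilde q).
\]
The hypotheses $\|x_i-x_0\|,\|x_j-x_0\|<\varepsilon_0$ are exactly those of Theorem~\ref{thm:separation}, with the same $\varepsilon_0$. Applying the theorem gives $\|\Phi_i-\Phi_j\|\ge \tfrac{1}{\sqrt2}\sqrt\alpha\,\delta$.

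\textbf{Step 3: the neighbor-search consequence.} Fix a search radius $r<\tfrac{1}{\sqrt2}\sqrt\alpha\,\delta$ and a query $\Phi_i$ with $x_i\in M_1$ and $\|x_i-x_0\|<\varepsilon_0$. Any returned neighbor $\Phi_j$ satisfies $\|\Phi_i-\Phi_j\|\le r$. If it also had $x_j\in M_2$ and $\|x_j-x_0\|<\varepsilon_0$, Step~2 would give $\|\Phi_i-\Phi_j\|\ge\tfrac{1}{\sqrt2}\sqrt\alpha\,\delta>r$, a contradiction. The case with the sheets swapped is symmetric.

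\textbf{Expected obstacle.} The mathematical content is entirely in Theorem~\ref{thm:separation}, so I do not expect any step here to be hard. The one point needing attention is Step~1: the projector $P_i$ must be evaluated on the same sheet as the sample's label, including samples at or extremely near $x_0$. I would state this labeling convention explicitly as part of the ``correct tangent planes'' hypothesis. I would also remark that the radius argument only constrains pairs inside the $\varepsilon_0$-ball: it makes no claim about plain $k$-NN with a fixed $k$, nor about pairs where one point lies outside the ball.
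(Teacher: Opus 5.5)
Your proposal is correct and follows the paper's proof, which consists only of the isometry identity $\|\Phi_i-\Phi_j\|^2=d_{\mathcal L}^2(\tilde x_i,\tilde x_j)$ followed by an application of Theorem~\ref{thm:separation}. The sheet-labeling convention at $x_0$ (your Step~1) and the explicit search-radius contradiction (your Step~3) are left implicit in the paper, and both are sound additions.
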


\begin{proof}
By construction,
$\|\Phi_i - \Phi_j\|^2
= d_{\mathcal{L}}^2(\tilde{x}_i,\,\tilde{x}_j)$,
so the bound follows directly from Theorem~\ref{thm:separation}.
\end{proof}

Corollary~\ref{cor:discrete-separation} makes the separation theorem practical. Given a non-manifold configuration whose minimal tangent disparity $\delta$ is known (or conservatively estimated), one can choose $\alpha > 2\|\Phi_i - \Phi_j\|^2/\delta^2$ and guarantee that nearest-neighbor queries never cross sheet boundaries near the singularity.

\begin{figure}
  \centering
  \includegraphics[width=\linewidth]{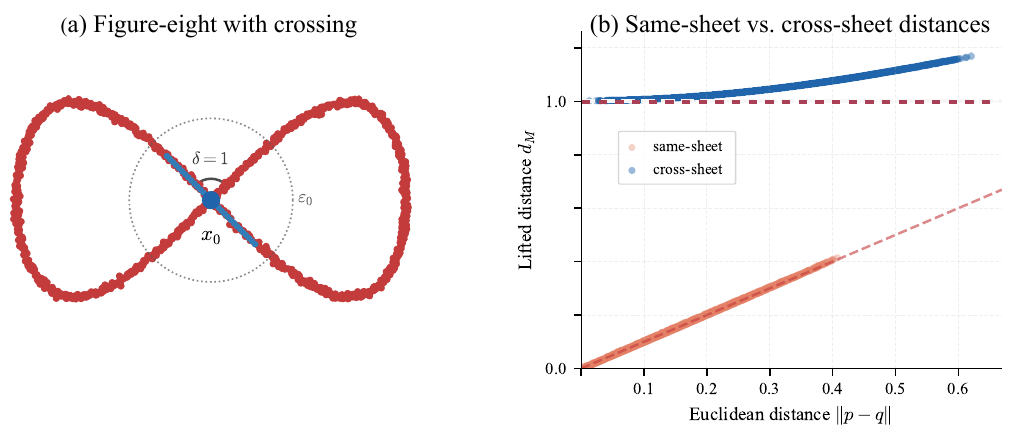}
  \caption{\label{fig:separation-theorem}
    Same-sheet vs.\ cross-sheet lifted distances on a figure-eight curve.  (a)~The curve with crossing point marked.
    (b)~Lifted distance~$d_{\mathcal{L}}$ as a function of Euclidean distance~$\|p - q\|$.  Same-sheet pairs (red) vanish with spatial distance; cross-sheet pairs (blue) are bounded below by the separation floor
    $\frac{1}{\sqrt{2}}\sqrt{\alpha}\,\delta$ (dashed).}
\end{figure}

The second guarantee is that the lifted image of each smooth sheet is regular enough to support the differential operators defined in~\S\ref{sec:lifted-grad-div} and preserves intrinsic dimension. The blow-up differentiates the Gauss map, so each iteration costs exactly one order of differentiability.
\begin{theorem}[Regularity of the Lifted Manifold]\label{thm:regularity} 
Let $M \subset \mathbb{R}^n$ be a $C^r$ submanifold of dimension~$d$ with $r \geq 2$.  Then the lifted embedding $\Phi : M \to \mathbb{R}^{n+n^2}$, $\Phi(x) = \bigl(x,\,\sqrt{\alpha/2}\;\mathrm{vec}(P_x)\bigr)$, is a $C^{r-1}$ embedding, and the lifted manifold $\widehat{M} = \Phi(M)$ is a $C^{r-1}$ submanifold of dimension~$d$.
\end{theorem}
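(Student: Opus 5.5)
The plan is to show, in order, that $\Phi$ is $C^{r-1}$, that it is an immersion, and that it is a homeomorphism onto its image; together these make $\Phi$ a $C^{r-1}$ embedding. That $\widehat{M}=\Phi(M)$ is then a $C^{r-1}$ submanifold of dimension $d$ is the standard fact that the image of an embedding is an embedded submanifold of the same dimension~\cite{lee2003smooth}.

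\emph{Regularity.} The first coordinate is the identity $x\mapsto x$, which is smooth, so everything rests on $x\mapsto P_x$. I will work locally. Near any $x\in M$, pick a $C^r$ local parametrization $\psi:W\subset\mathbb{R}^d\to M$. Its Jacobian $D\psi(u)\in\mathbb{R}^{n\times d}$ is $C^{r-1}$ in $u$ and has full rank $d$. Write $J = D\psi$. The tangent projector is then
\[
P_{\psi(u)} = J\,(J^\top J)^{-1}J^\top .
\]
This formula is basis-independent, by the same argument the paper uses for $Q_i$. It is a composition of $C^{r-1}$ maps with matrix products and matrix inversion, and inversion is real-analytic on invertible matrices. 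Hence $u\mapsto P_{\psi(u)}$ is $C^{r-1}$, and so is $\Phi\circ\psi$. Since charts of $M$ are $C^r$-compatible and $r-1\le r$, this shows $\Phi$ is $C^{r-1}$ on $M$. We need $r\ge 2$ so that $r-1\ge 1$ and the differential of $\Phi$ makes sense.

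\emph{Immersion.} Write $\pi_1:\mathbb{R}^{n+n^2}\to\mathbb{R}^n$ for the projection onto the first factor. For $v\in T_xM$ we have
\[
d\Phi_x(v) = \bigl(v,\ \sqrt{\alpha/2}\,\mathrm{vec}(\mathcal{P}_v)\bigr),
\]
which matches the lifted tangent vector computed in \S\ref{sec:projector-derivative}. Its first component is $v$ itself, so $d\Phi_x(v)=0$ forces $v=0$. Thus $d\Phi_x$ is injective and $\Phi$ is an immersion. This also gives $\|d\Phi_x(v)\|\ge\|v\|$.

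\emph{Topological embedding.} From $\pi_1\circ\Phi=\mathrm{id}_M$, the map $\Phi$ is injective. Its inverse on $\widehat{M}$ is the restriction of the continuous linear map $\pi_1$, so $\Phi:M\to\widehat{M}$ is a homeomorphism when $\widehat{M}$ carries the subspace topology. Equivalently, $\widehat{M}$ is the graph of the $C^{r-1}$ map $x\mapsto\sqrt{\alpha/2}\,\mathrm{vec}(P_x)$ over $M$, and graphs of such maps are always embedded submanifolds.

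\emph{Main obstacle.} The delicate step is the regularity count: making sure the projector field loses exactly one derivative, and no more, regardless of the chart. The explicit Jacobian formula above settles this. A secondary point is that $M$ must be embedded rather than merely immersed, so that its charts are compatible with the subspace topology. This is part of the hypothesis that $M$ is a submanifold.
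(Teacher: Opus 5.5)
Your proposal is correct and follows essentially the same route as the paper. The paper also notes that $\Phi$ is a section of the projection onto the first factor, which gives injectivity, immersivity, and the continuous inverse $\pi|_{\widehat{M}}$ for free, and then shows $x\mapsto P_x$ is $C^{r-1}$ via the local formula $P_x = J(J^\top J)^{-1}J^\top$, where $J$ is the $C^{r-1}$ Jacobian of a $C^r$ parametrization. The differences are only cosmetic: you prove regularity first, and the first component is the $C^r$ inclusion $M\hookrightarrow\mathbb{R}^n$ rather than a ``smooth'' map, which does not affect the conclusion.
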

The proof is given in Appendix A.

These results establish that, for transversely intersecting $C^{r}$ manifolds, the blow-up produces a collection of well-separated, $C^{r-1}$-regular submanifolds in Euclidean space. 
        
\section{Lifted Differential Operators}
\begin{figure*}[tbp]
  \centering 
  \includegraphics[width=0.8\linewidth]{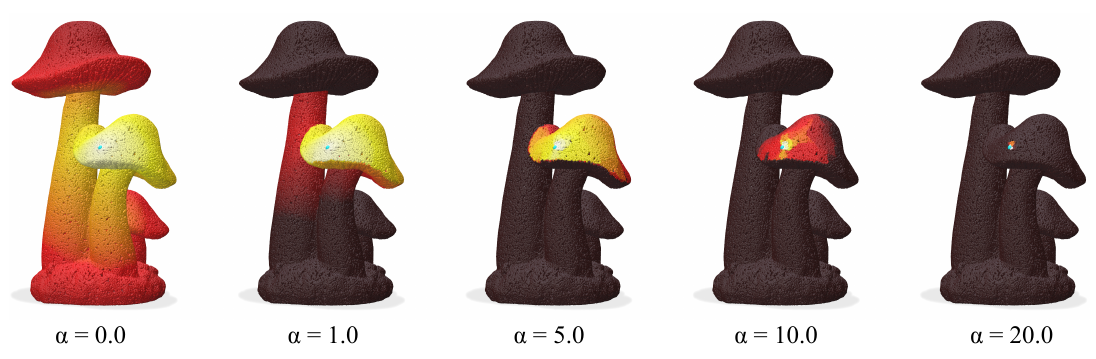}
  \caption{\label{fig:alpha-ablation}Ablation over the projector weight $\alpha$ on the first lift of a mushroom point cloud~\cite{Thingi10K}. Heat is placed at a single source point on the middle cap. At $\alpha=0$ the kernel ignores tangents entirely and heat bleeds across all sheets; as $\alpha$ grows, diffusion remains on each smooth sheet. At higher values of $\alpha$, diffusion becomes highly sensitive to underlying tangents.}
\end{figure*}

To perform downstream geometry processing tasks directly on non-manifold point clouds, we now define examples of discretized differential operators on the tangent blow-up representation following standard meshless constructions. We construct a point cloud Laplacian for diffusion and spectral analysis (\S\ref{sec:lifted-laplacian}), gradient and divergence operators for vector field computations (\S\ref{sec:lifted-grad-div}), and define pointwise curvature invariants extracted from the second fundamental form (\S\ref{sec:lifted-curvature}).

\subsection{Laplacian}\label{sec:lifted-laplacian}

We model the lifted point cloud as a weighted graph by using a multi-scale product kernel derived from the lifted product metric that decouples spatial and tangential components:
\[
    W_{ij} = \exp\!\left(-\frac{\|x_i - x_j\|^2}{\sigma_0^2}\right) \cdot \prod_{m=1}^M \exp\!\left(-\frac{\|P_i^{(m)} - P_j^{(m)}\|_F^2}{\sigma_m^2}\right),
\]
where $\sigma_0$ is the spatial bandwidth and $\sigma_m$ is the bandwidth for the $m$-th tangent projector given $M$ blow-up iterations. 
Each bandwidth can either be fixed, or set to the median $k$-nearest-neighbor distance in the corresponding factor of the product metric~\cite{zelnik2004self}, so that the kernel adapts to the intrinsic scale of each component.
Incorporating Grassmannian components results in geometrically distinct points with differing unoriented tangents having reduced affinity near singularities.

From the affinity $W$ and degree matrix $\operatorname{diag}(W\mathbf{1})$ we form either the standard unnormalized Laplacian $L=\operatorname{diag}(W\mathbf{1})-W$ or its symmetric normalized variant $L_{\mathrm{sym}}=I-\operatorname{diag}(W\mathbf{1})^{-1/2}W\operatorname{diag}(W\mathbf{1})^{-1/2}$~\cite{ coifman2006diffusion,von2007tutorial}.
Recall from \S\ref{sec:theoretical-analysis} that transversely intersecting sheets are pulled apart in the lifted embedding by their tangent spaces, while each smooth sheet remains smooth under the lift. Thus, if the point cloud sample is drawn from $s$ smooth components whose pairwise tangent angles exceed the kernel bandwidth, for sufficiently fine sampling the lifted affinity is close to block diagonal, with one block per component. Standard perturbation theory for block-diagonal symmetric matrices~\cite{von2007tutorial} yields exactly $s$ near-zero eigenvalues whose corresponding eigenvectors approximate indicators of the individual components, which are then recovered via the spectral clustering pipeline of \S\ref{sec:spectral-segmentation}.

\subsection{Gradient and Divergence}\label{sec:lifted-grad-div}

We construct standalone gradient and divergence operators using a standard weighted moving least squares (MLS) construction on the lifted point cloud~\cite{nealen2004short, liang2013solving,wiersma2022deltaconv}. Both operators reuse the self-tuning product-kernel weights $W_{ij}$ from the affinity matrix $W$ defined in \S\ref{sec:lifted-laplacian}.
At each point $\Phi_i$ the tangent projector $Q_i = R_i R_i^\top$ provides an orthonormal tangent frame $R_i\in\mathbb{R}^{D\times d}$, where $D = n + n^2$ is the ambient dimension of the first lift and $d$ is the rank of the tangent projector $Q_i$.
For a second-level lift, $\Phi_i$, $R_i$, and $Q_i$ are replaced by their level-two counterparts from Algorithm \ref{alg:iterated-blowup}, with ambient dimension $D_2 = D_1 + D_1^2$.

\paragraph*{Discrete gradient.}
Let $f_i = f(\Phi_i)$ be a scalar function on the lifted point cloud. 
We estimate $\operatorname{grad} f_i$ by fitting a degree-one polynomial in the tangent plane at point $\Phi_i$.  
First, project each neighbor displacement into the local tangent frame,
\(
  \bar\delta_{ij} \;=\; R_i^\top(\Phi_j - \Phi_i)
    \;\in\; \mathbb{R}^d,
\)
and then form the $d\times d$ weighted moment matrix
\[
  S_i \;=\; \sum_{j\in\mathcal{N}(i)}
    W_{ij}\,\bar\delta_{ij}\,\bar\delta_{ij}^\top.
\]
Substituting $f(\Phi_j) \approx f_i + \bar g^\top \bar\delta_{ij}$ and minimizing $\sum_j W_{ij}\,\|\bar g^\top \bar\delta_{ij} - (f_j - f_i)\|^2$ over $\bar g \in \mathbb{R}^d$ yields
\[
  \bar g_i \;=\; S_i^{-1} \sum_{j\in\mathcal{N}(i)} W_{ij}\,\bar\delta_{ij}\,(f_j - f_i).
\]
Lifting back to the ambient space gives the estimated gradient:
\(
  (\operatorname{grad}_{\widehat{\mathbb{Y}}}\, f)_i \;=\; R_i\,\bar g_i \;\in\; \mathbb{R}^{D}.
\)

\paragraph*{Discrete divergence.} Given a tangent vector field $X_i \in \mathrm{col}(R_i)$ on the lifted manifold, we estimate its divergence at vertex $i$ by regressing changes in $X$ against displacements, with all quantities expressed in the local tangent frame $R_i$. Concretely: \begin{enumerate} 
\item Project the vector field at $i$ and at each neighbor $j$ into point $i$'s frame: $\bar{X}_k^{(i)} = R_i^\top X_k$ for $k \in {i} \cup \mathcal{N}(i)$. 
\item Estimate the tangent-plane Jacobian by regressing the frame-consistent differences against projected displacements: $$J_i^{a} = \sum_{j \in \mathcal{N}(i)} W_{ij} \, \bar{\delta}_{ij} \, \bigl(\bar{X}_j^{(i)} - \bar{X}_i^{(i)}\bigr)^a \; \in \; \mathbb{R}^d.$$
\item Take the trace: $(\mathrm{div}_{\widehat{\mathbb{Y}}}\, X)_i = \mathrm{tr}(S_i^{-1}\, J_i)$, where $J_i = [J_i^1 ;\cdots; J_i^d]$. 
\end{enumerate} 

\subsection{Curvature}\label{sec:lifted-curvature}

Having estimated $B_i \approx B_{x_i}$ at each sample (\S\ref{sec:second-lift}), we now extract classical curvature invariants. We restrict to hypersurfaces ($n - d = 1$) in this section; the estimation of $B_i$ applies in any codimension, but the invariants extracted differ.

For a hypersurface, $N_i$ is a single unit normal and $B_i(e_a) \in \mathbb{R}^{1 \times d}$ is a row vector. Stacking the rows gives the $d \times d$ symmetric matrix $S_i = [B_i(e_1);\,\cdots\,;\,B_i(e_d)]$, which is the shape operator in the tangent basis $U_i$. Its eigenvalues are the principal curvatures $\kappa_1, \ldots, \kappa_d$, from which we compute:
\begin{align*}
  K &= \textstyle\prod_a \kappa_a = \det(S_i),\\
  H &= \textstyle\frac{1}{d}\sum_a \kappa_a
     = \frac{1}{d}\,\mathrm{tr}(S_i),\\
  \|\mathrm{II}\|_F &= \textstyle\sqrt{\sum_a \kappa_a^2}
     = \|S_i\|_F.
\end{align*}
Although $S_i$ depends on the choice of $U_i$ and $N_i$, these invariants do not. $K$ and $\|\mathrm{II}\|_F$ are unchanged under any change of basis, and $H$ changes only in sign under $N_i \to -N_i$. We therefore report the sign-free quantities $K$, $H^2$, and $\|\mathrm{II}\|_F$, which require no consistent orientation (we discuss the implications of orientation invariance in \S\ref{sec:unoriented-discussion}).

\section{Experiments}

We demonstrate the tangent blow-up on four tasks: geodesic computation (\S\ref{sec:lifted-geodesics}), segmentation (\S\ref{sec:spectral-segmentation}), surface parameterization (\S\ref{sec:spectral-parameterization}), and curvature estimation (\S\ref{sec:lifted-curvature}).
We set $\alpha = 1.0$ for all examples, except in Figure~\ref{fig:teaser}, where we use $\alpha = 2.0$. For transverse intersections, $\alpha$ can be chosen using the separation bound in Theorem~\ref{thm:separation}. In principle, the same criterion applies recursively to higher-order lifts. In practice, increasing $\alpha$ makes the metric more sensitive to curvature-induced tangent variation, as illustrated in Figure~\ref{fig:alpha-ablation}. For iterated blow-ups, this applies at each level, with higher-level weights controlling sensitivity to higher-order features.

For kernel constructions, we use $k = 20$ nearest neighbors.
Spatial and angular bandwidths are set via self-tuning median $k$-NN distances~\cite{zelnik2004self}, where $\sigma_0$ is the median $k$-NN distance in spatial coordinates across all points, and each angular bandwidth $\sigma_{m}$ is the median $k$-NN distance in the corresponding projector component of the lifted product metric defined in \S\ref{sec:lifted-metric}. For geodesic computation, these product-kernel weights also serve as the regression weights $W_{ij}$ for the lifted gradient and divergence operators.

We test our algorithm on four classes of data. \emph{Analytic shapes} like the Klein bottle provide controlled ground truth points and normals for quantitative evaluation~\cite{do2016differential}. 
The \emph{Thingi10k}~\cite{Thingi10K} and \emph{ThreeDScans}~\cite{threedscans} datasets provide non-manifold geometry with self-intersections and T-junctions; we uniformly sample points from these meshes and take face normals.
Lastly, the \emph{SceneNN}~\cite{hua2016scenenn} dataset provides real scans of rooms for testing our method on noisy data with estimated normals.

\subsection{Geodesics}\label{sec:lifted-geodesics}
We compute approximate lifted geodesic distances via the heat method~\cite{crane2013geodesics}, which recovers geodesic distance in three steps: (1) diffuse a delta source for short time by solving $(I + tL)u = \delta_s$, (2) normalize the gradient of the diffused field to obtain a unit vector field $X = -\nabla u / |\nabla u|$ pointing away from the source, and (3) solve a Poisson equation $L\phi = -\mathrm{div}(X)$ to recover the distance function. In the lifted setting, we replace the standard Laplacian, gradient, and divergence with their lifted counterparts.

We compare against the nonmanifold Laplacian of Sharp and Crane~\cite{sharp2020laplacian}, which constructs an intrinsic Delaunay triangulation that handles non-manifold edges and boundaries. It operates on positions only and does not use tangent-plane information for sheet separation.
The distances recovered by the lifted heat method approximate geodesics in the product metric, which penalizes tangent-plane rotation proportionally to $\alpha$. Hence, as seen in Figure~\ref{fig:geodesics}, lifted geodesics tend to favor straighter paths with smoothly varying tangents. This notion of distance is applicable whenever one wants shortest paths that respect the sheet structure of a non-manifold surface -- for example, tracing paths on individual components of an object like the ship in Figure~\ref{fig:geodesics}, without segmenting the hull, deck, and sails beforehand.

\begin{figure}
    \centering
    \includegraphics[width=\linewidth]{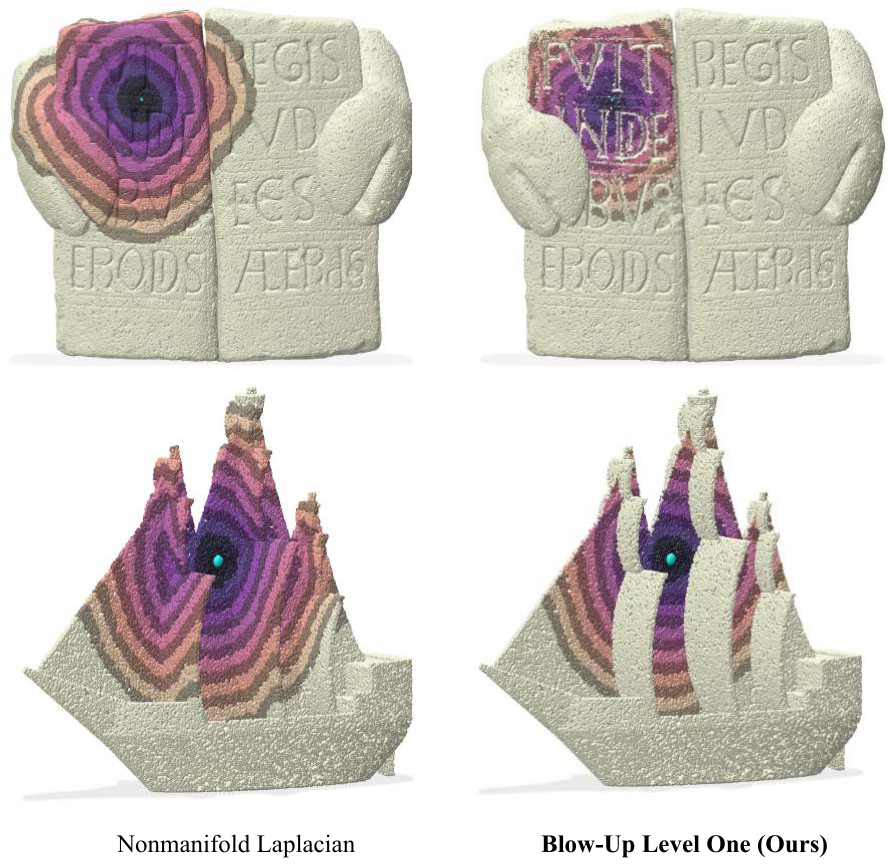}
    \caption{\label{fig:geodesics} Geodesic distance via the heat method, with the source point marked in cyan. The nonmanifold Laplacian~\cite{sharp2020laplacian} (left) diffuses across self-intersections, while the lifted Laplacian (right) is restricted to geometrically distinct components.}
\end{figure}

To demonstrate our approach on real scanned data with estimated normals, we compute geodesics for several rooms from the \textit{SceneNN} dataset \cite{hua2016scenenn}. Any normal estimation algorithm can be used; we opt for \textit{HSurf-Net} \cite{li2022hsurf}. The results are illustrated in Figure~\ref{fig:rooms}, where we compute lifted geodesics along scanned floors. Despite missing regions, clutter, and noise, the resulting distance contours are constrained to the floor surface and respect obstacles such as walls and furniture. Such behavior may be useful for downstream navigation tasks such as path planning, where lifted geodesics could allow a robot to reason about shortest traversable paths along the floor while avoiding non-traversable regions.

\begin{figure}
    \centering
    \includegraphics[width=\linewidth]{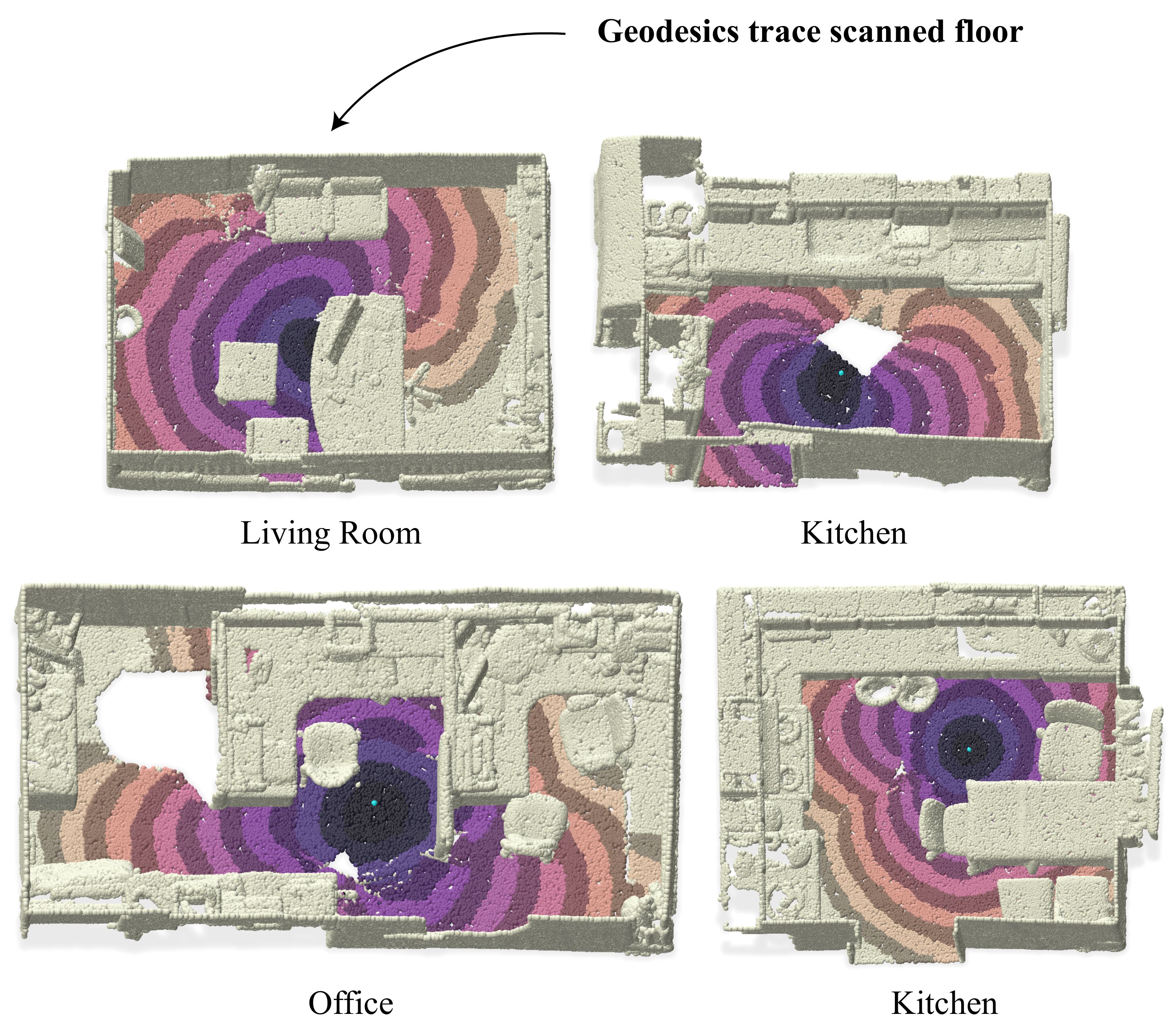}
    \caption{\label{fig:rooms}
    Lifted geodesics on scanned indoor scenes from the \textit{SceneNN} dataset~\cite{hua2016scenenn} using normals estimated by \textit{HSurf-Net}~\cite{li2022hsurf}. Across living room, kitchen, and office scans, the computed distance contours trace the floor surface while respecting walls, furniture, and missing regions. This suggests that our method can operate on noisy real-world reconstructions and is relevant for applications such as path planning on scanned indoor environments.
    }
\end{figure}

\subsection{Spectral Segmentation}\label{sec:spectral-segmentation}
The block structure of $L_{\mathrm{sym}}$ established in
\S{\ref{sec:lifted-laplacian}} reduces segmentation of a non-manifold point cloud to standard spectral clustering~\cite{ng2001spectral,von2007tutorial}. We take the $m = 10$ smallest non-trivial eigenvectors of $L_{\mathrm{sym}}$, stack them as rows, and then unit-normalise each row~\cite{ng2001spectral}. We then use DBSCAN~\cite{ester1996density} with radius $\varepsilon=0.2$ to cluster points, which avoids fixing the number of components and, because it ignores directions of low variance, is insensitive to moderate overestimates of $m$.

As seen in Figure~\ref{fig:segmentation}, lifted segmentation recovers geometrically distinct components such as individual icicles or the distinct sides of the fandisk.
Since neighboring points on different components, as occur near sharp corners or sheet intersections, carry distinct projectors, the lifted metric suppresses their mutual affinity in $W$ without any explicit singularity detection or feature classification.
Recovering individual smooth components from non-manifold geometry in such a way can be useful for downstream tasks such as per-component surface parameterization (\S\ref{sec:spectral-parameterization}) and curvature estimation (\S\ref{sec:curvature}).

\begin{figure}[h]
    \centering
    \includegraphics[width=\linewidth]{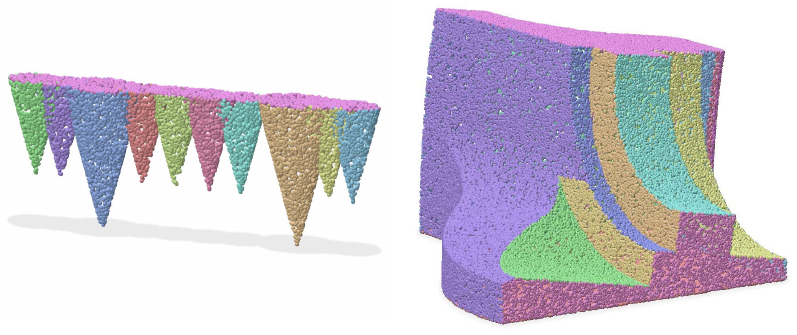}
    \caption{\label{fig:segmentation}Lifted spectral segmentation on two non-manifold point clouds.  Left: the self-intersecting \emph{icicles} model (Thingi10k~\cite{Thingi10K}). Right: the piece-wise smooth \emph{fandisk}~\cite{hoppe1994piecewise}. Lifted spectral segmentation recovers individual smooth components.}
\end{figure}

\paragraph*{Spectral Parameterization.}\label{sec:spectral-parameterization}

Once the segmentation of \S{\ref{sec:spectral-segmentation}} has isolated smooth components, we parameterize them individually using standard Laplacian eigenmaps~\cite{belkin2003laplacian} on the same lifted Laplacian used previously (the self-tuning product-kernel, symmetric normalised $L_{\mathrm{sym}}$ from \S{\ref{sec:lifted-laplacian}}).

We take the two smallest non-trivial eigenvectors of $L_{\mathrm{sym}}$ and use them directly as UV coordinates.
Because $L_{\mathrm{sym}}$ is defined using the lifted metric, the corresponding eigenmap respects the local tangent frame even when the underlying components are self-intersecting and nonorientable; Figure~\ref{fig:nonorientable_parameterization} illustrates this on two nonorientable surfaces in $\mathbb{R}^3$ whose immersions necessarily self-intersect. The lifted Laplacian separates intersecting components through the projector term in the product metric, and because the projector $P = UU^\top$ is sign-free by construction, the eigenmaps extend smoothly across regions where a consistent normal choice is impossible.

Typically, to perform surface parameterization, an artist would need to cut a shape into individual pieces that are separately parameterized and combined into a texture atlas. Our method resolves intersections and produces geometric cuts automatically, since the spectral gap induced by differing tangent planes (or, at the second blow-up level, differing curvatures) partitions the surface into smooth components. This means that an artist or automated pipeline can go from a non-manifold point cloud to per-component processing without ever manually cutting the geometry apart.

\begin{figure}
    \centering
    \includegraphics[width=\linewidth]{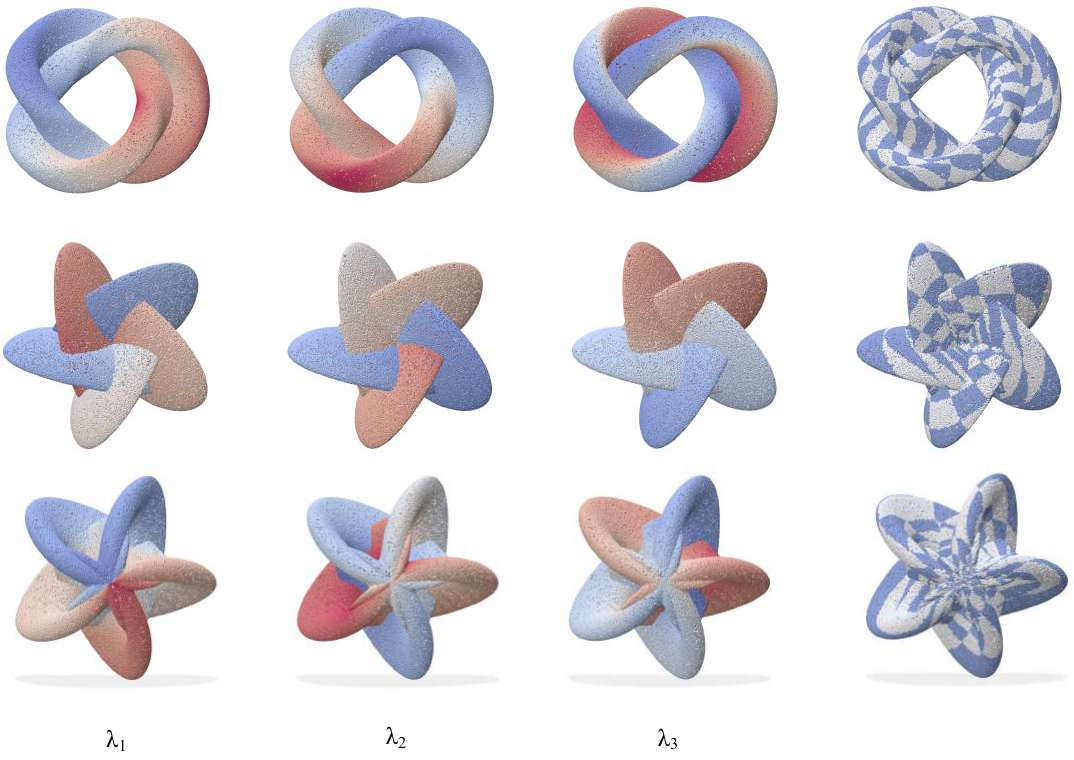}
    \caption{\label{fig:nonorientable_parameterization}Spectral parameterization of two self-intersecting nonorientable surfaces: Banchoff's Klein bottle~\cite{banchoff1976minimal,Busser-nonorientable} (top) and a five-branch Boy-like surface~\cite{petit1981representation,Busser-nonorientable} (bottom two rows). Left to right:  of the three smallest non-trivial eigenvectors of the lifted Laplacian $L_{\mathrm{sym}}$, followed by the induced UV checker overlay. Because the lifted Laplacian is built from projectors, the eigenmaps extend smoothly through self-intersections and across regions where a consistent normal choice is impossible.}
\end{figure}

\subsection{Curvature}\label{sec:curvature}

For a parametric immersion $f\colon U \to \mathbb{R}^3$ whose Jacobian has full rank everywhere, Gaussian curvature~$K$ and mean curvature~$H$ are well-defined at every parameter value, even where the image surface self-intersects in $\mathbb{R}^3$~\cite{do2016differential}. A point cloud sampled from the image of~$f$ thus inherits per-point parameter coordinates $(u_i, v_i)$ and well-defined curvature per sheet. 

The lifted regression of $B_i$ (\S\ref{sec:lifted-curvature}) approximates the curvature of the underlying immersion at transverse intersections. We use this per-sheet ground truth for quantitative evaluation on two parametric surfaces and complement it with qualitative comparisons on Thingi10k models~\cite{Thingi10K} and ThreeDScans~\cite{threedscans}, all sampled at up to $N = 200{,}000$ points.

\paragraph*{Baselines.}
We compare against two point cloud curvature estimators.
Jet fitting~\cite{CazalsPouget2005} estimates a
truncated Taylor expansion (an osculating jet) of the surface in a local coordinate system aligned with the estimated normal. We use a degree-$2$ jet with $k = 20$ neighbors to match our blow-up baseline. Randomized Corrected Curvature Measures (CNC)~\cite{LachaudCoeurjolly2023} estimates the curvature tensor via corrected curvature measures~\cite{LachaudRomonThibert2022}: for each query point, $L$ random triangles are sampled from the $k$ nearest neighbors. We use the authors' code with $k = 20$ and $L = 100$. Jet fitting requires only positions and estimates its own local frame from the $k$-NN neighborhood; CNC additionally requires oriented normals as input.

\subsection{Quantitative evaluation.}
We consider two surfaces: the torus, a smooth manifold that serves to validate our implementation, and the Klein bottle, a nonorientable surface whose immersion in $\mathbb{R}^3$ contains self-intersections.

The analytic curvature of the torus depend only on the poloidal angle~$v$~\cite{robbin2022introduction}, so we can assess estimation accuracy as a function of a single variable. Figure~\ref{fig:torus-curvature-plot} shows the estimated
curvature profiles alongside the analytic ground truth (top row) and the corresponding absolute errors (middle row) as functions of the poloidal angle~$v$. All three estimators track the ground truth closely, with median errors in the range $10^{-4}$ to  $10^{-3}$
The dip near $v = \pi$ in the $H^2$ error reflects the fact that $H$ passes through zero at the inner equator for these parameter values ($R = 2r$), so the absolute error is small.

The Klein bottle is nonorientable and self-intersects when immersed in $\mathbb{R}^3$~\cite{abbena2017modern}. To quantify how each method degrades near singularities, we estimate each sample point's proximity to the Klein bottle's self-intersection. Two points that are close in $\mathbb{R}^3$ but far apart in the parameter domain $(u,v)$ must lie on different sheets of the immersion. For each sample, we find its nearest neighbor in $\mathbb{R}^3$ among all samples whose parameter-space distance on the periodic $[0,2\pi)^2$ domain exceeds the heuristic threshold $\pi/2$, and record the ambient distance to that neighbor. We call this the \emph{singularity distance} of that sample.

\begin{figure}[ht]
    \centering
    \includegraphics[width=\linewidth]{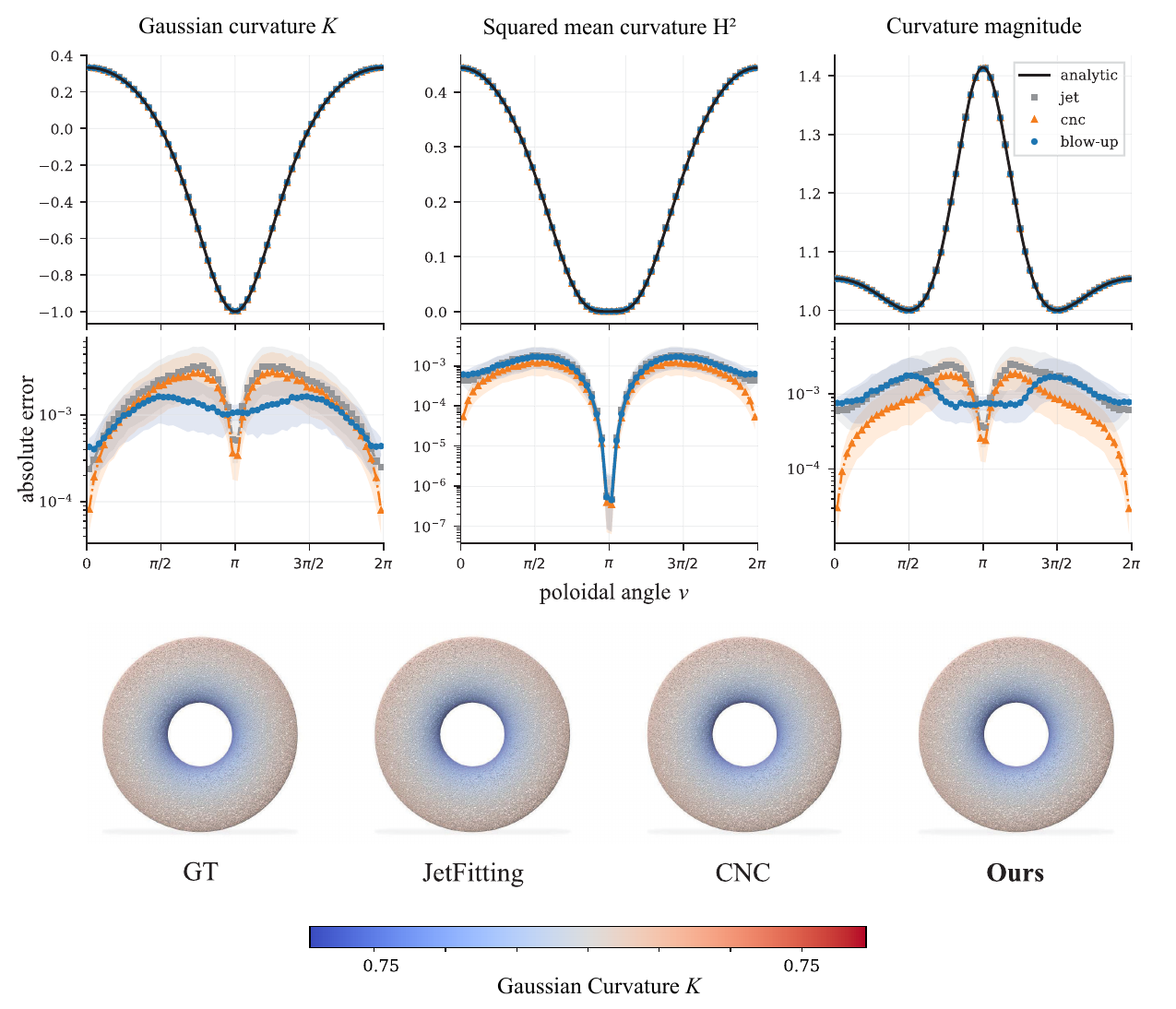}
    \caption{\label{fig:torus-curvature-plot}
    Curvature estimation on the torus ($R=2$, $r=1$,
    $N=100{,}000$), a smooth manifold with no singularities. Top row: estimated $K$ and $H^2$ as a function of the poloidal angle~$v$; the analytic ground truth (black) is nearly occluded by all three estimators. Middle row: median absolute error. All three methods achieve comparable accuracy. Bottom row: Gaussian curvature rendered on the surface.}
\end{figure}

Figure~\ref{fig:error-vs-singularity} plots the median absolute error as a function of singularity distance. Near the self-intersection (singularity distance ${<}\,0.1$), both jet fitting and CNC exhibit error spikes of two to three orders of magnitude: median $|K - K_{\mathrm{gt}}|$ rises from ${\sim}10^{-3}$ to ${\sim}10^{1}$ for jet fitting, and CNC rises similarly. The Klein bottle's nonorientability compounds the problem for CNC, which requires oriented normals as input.
The blow-up error, by contrast, remains flat at
${\sim}5 \times 10^{-4}$ across all singularity distances. Figure~\ref{fig:klein-curvature} renders the three curvature invariants on the surface, with ground truth for comparison.

\begin{figure}[t]
    \centering
    \includegraphics[width=\linewidth]{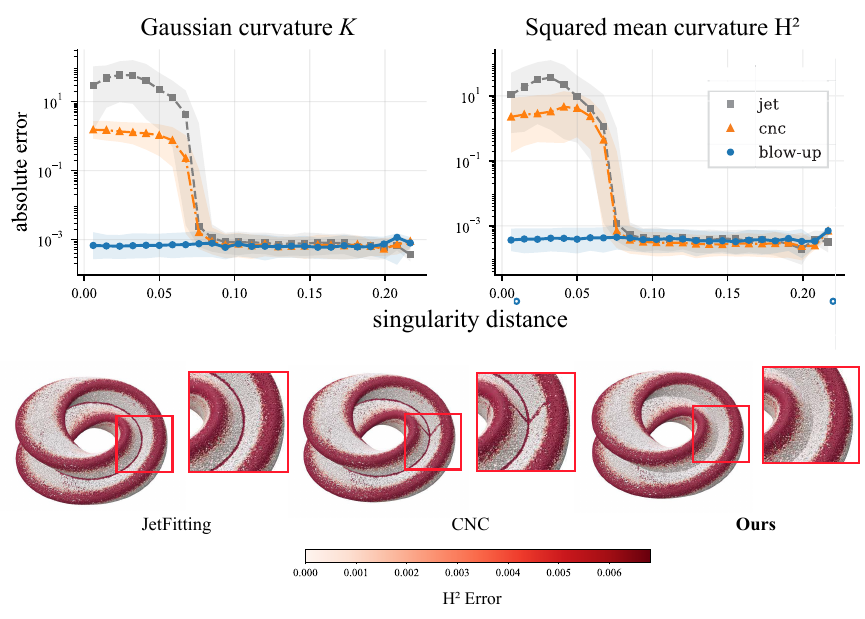}
    \caption{\label{fig:error-vs-singularity}
    Curvature estimation error on the Klein bottle with $R = 2$ and $(u,v) \in [0,2\pi)^2$, sampled at $N = 100{,}000$ points, as a function of singularity distance (defined in \S\ref{sec:curvature}). Each sample is binned by its ambient distance to the nearest cross-sheet neighbor (25 bins, median point plotted with interquartile range shaded). Bottom row: per-point error in curvature magnitude $|\|I\!I\|_F - \|II_{\mathrm{gt}}\|_F|$ rendered on the surface for each method; the blow-up shows no visible degradation at the intersection.}
\end{figure}

\begin{figure}
    \centering
    \includegraphics[width=\linewidth]{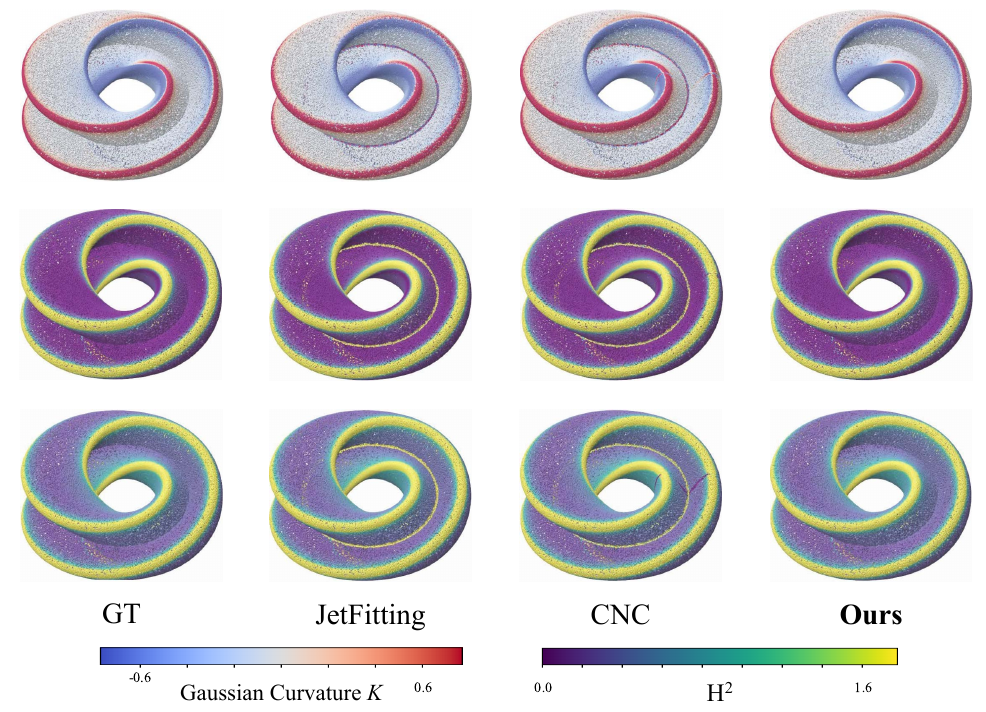}
    \caption{\label{fig:klein-curvature}
    Curvature estimates on the Klein bottle (left to right: ground truth, jet fitting, CNC, blow-up). Rows show $K$ (top), $H^2$ (middle), and curvature magnitude $\|I\!I\|_F$ (bottom). Jet fitting and CNC produce visible artifacts along the self-intersection circle, while the blow-up closely matches the ground truth across all three invariants.}
\end{figure}

\subsection{Qualitative evaluation.}
We run all three methods on point clouds sampled from several meshes that contain self-intersections, T-junctions, or thin features. Each model is uniformly sampled at up to $200{,}000$ oriented points. No analytic ground truth is available, so our criterion is whether the estimated curvature field is ``spatially coherent'' on each geometric component, meaning smooth where the surface is smooth, and free of high curvature artifacts at self-intersections. All three methods use $k = 20$ nearest neighbors; jet fitting uses degree~$2$, CNC uses $L = 100$ random triangles with a uniform kernel.

The torus knot (Figure~\ref{fig:knot-gaussian}) is a mesh from
Thingi10k~\cite{Thingi10K} whose geometry
passes over and under itself. We render its Gaussian curvature~$K$. The inset (red box) zooms into a self-intersecting region. CNC (left) produces a curvature field such that the local sign of $K$ becomes negative at intersections. The blow-up (right) preserves positivity through every crossing, meaning the curvature on each tube segment is less affected by the presence of the other.

\begin{figure}
    \centering
    \includegraphics[width=\linewidth]{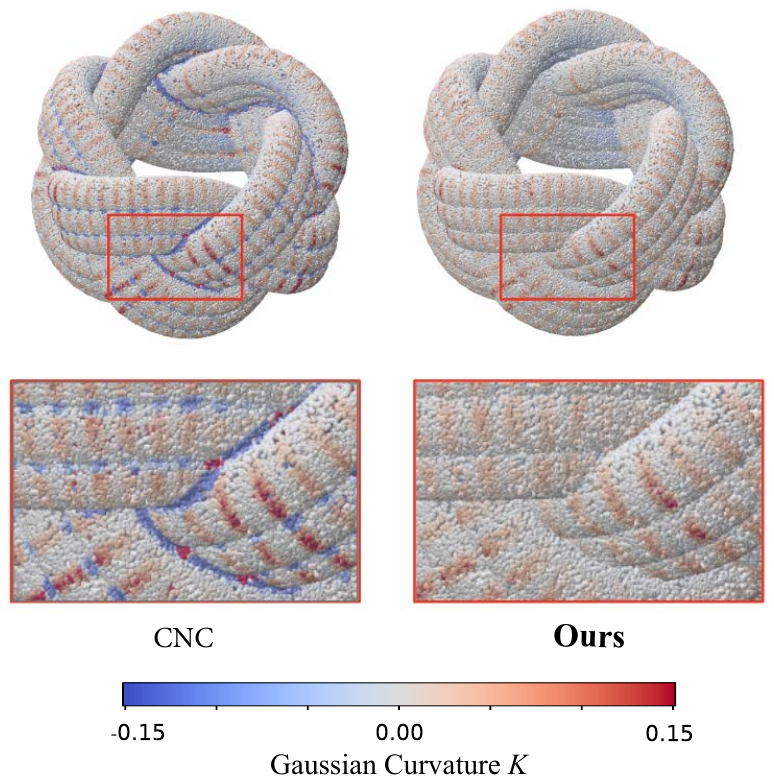}
    \caption{\label{fig:knot-gaussian} Gaussian curvature $K$ on a self-intersecting torus knot.
    Bottom: inset zoom of the boxed crossing region. CNC produces negative curvature at the crossings. The blow-up maintains consistent banding on each knot segment through the intersections. Positive red banding is recovered from the underlying mesh.}
\end{figure}

The propeller model~\cite{Thingi10K} (Figure~\ref{fig:barspin-h2}) consists of blades which intersect a cylindrical hub. The blow-up recovers geometrically meaningful curvature on each separate component: the cylindrical base carries uniform magnitude and the flat blade faces read near zero. Jet fitting and CNC both report elevated $H^2$ along every intersection seam and sharp edge, making it difficult to distinguish the curvature of individual sheets from non-manifold junctions.

Figure~\ref{fig:greyounds-curvature} shows squared mean curvature $H^2$ estimated on a point cloud sampled from two greyhound sculptures (ThreeDScans~\cite{threedscans}). Where the bodies touch, CNC reports elevated curvature along the intersection curves, visible as yellow streaks across the paws and chest in the close-up. The blow-up estimator attributes curvature to each sheet independently, so these ridges are absent. In practice this means per-component curvature fields can be recovered directly from the raw point cloud, with no prior segmentation or intersection removal.

\begin{figure}
    \centering
    \includegraphics[width=\linewidth]{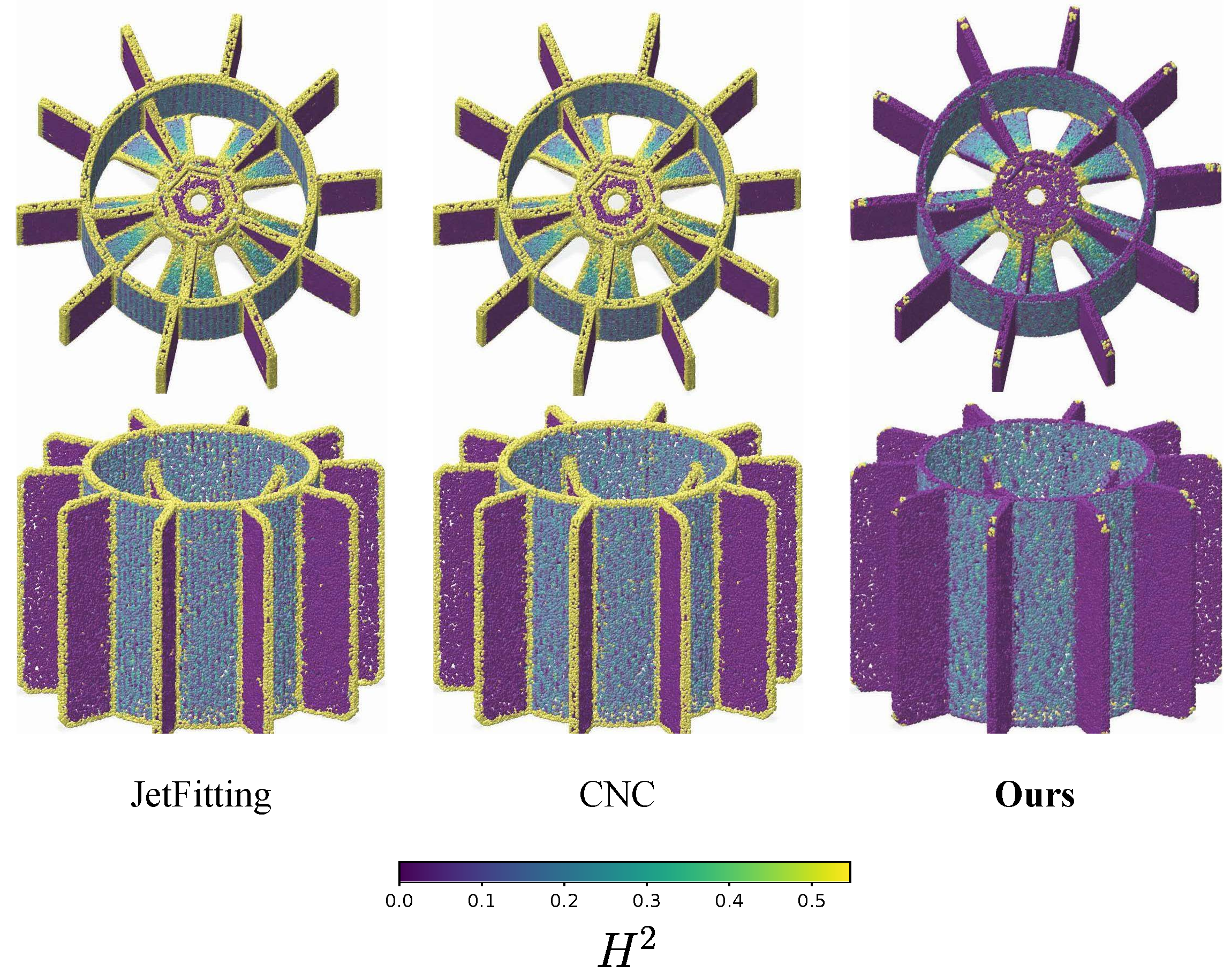}
    \caption{Mean squared curvature $H^2$ on a self-intersecting propeller model (Thingi10k~\cite{Thingi10K}). Left: jet fitting. Center: CNC. Right: blow-up (ours). Jet fitting and CNC exhibit high-magnitude estimates (yellow) along intersections and edges. The discrete blow-up recovers consistent per-component curvature: the cylindrical base carries uniform magnitude, and the flat blade faces read near zero. No singularity detection or segmentation is applied as a preprocessing step.}
    \label{fig:barspin-h2}
\end{figure}

\paragraph*{Orientation Invariance.}\label{sec:unoriented-discussion}
The curvature invariants we report are orientation-free by definition. Hence, any method that estimates them correctly should produce the same values regardless of a globally consistent normal field. 
Jet fitting, as seen in Figure~\ref{fig:klein-curvature}, is an ``orientation free'' algorithm~\cite{CazalsPouget2005}. By contrast, CNC uses oriented normals, so inconsistent orientations corrupt the estimates~\cite{LachaudCoeurjolly2023}.
Our algorithm avoids oriented normals entirely, since the projector $P = UU^\top$, its derivative $\mathcal{P}_v$, and the regression for $B_i$ all operate without choosing a consistent normal direction. Orientation enters only in the final step when extracting $B_x(v) = N^\top \mathcal{P}_v\,U$, where we report orientation-free invariants. This makes the estimator applicable to nonorientable surfaces (Figure~\ref{fig:mobius-intro}), but also means curvature is underestimated at thin shell edges where opposite faces carry nearly parallel tangent planes with opposite normals. The projector sees both ``sides'' of a thin sheet as having the same tangent plane, so it reports low curvature at thin edges whereas CNC detects the normal reversal, as seen in Figure~\ref{fig:c6-curvature}. Whether this trade-off is desirable depends on the application.

\begin{figure}[ht]
    \centering
    \includegraphics[width=\linewidth]{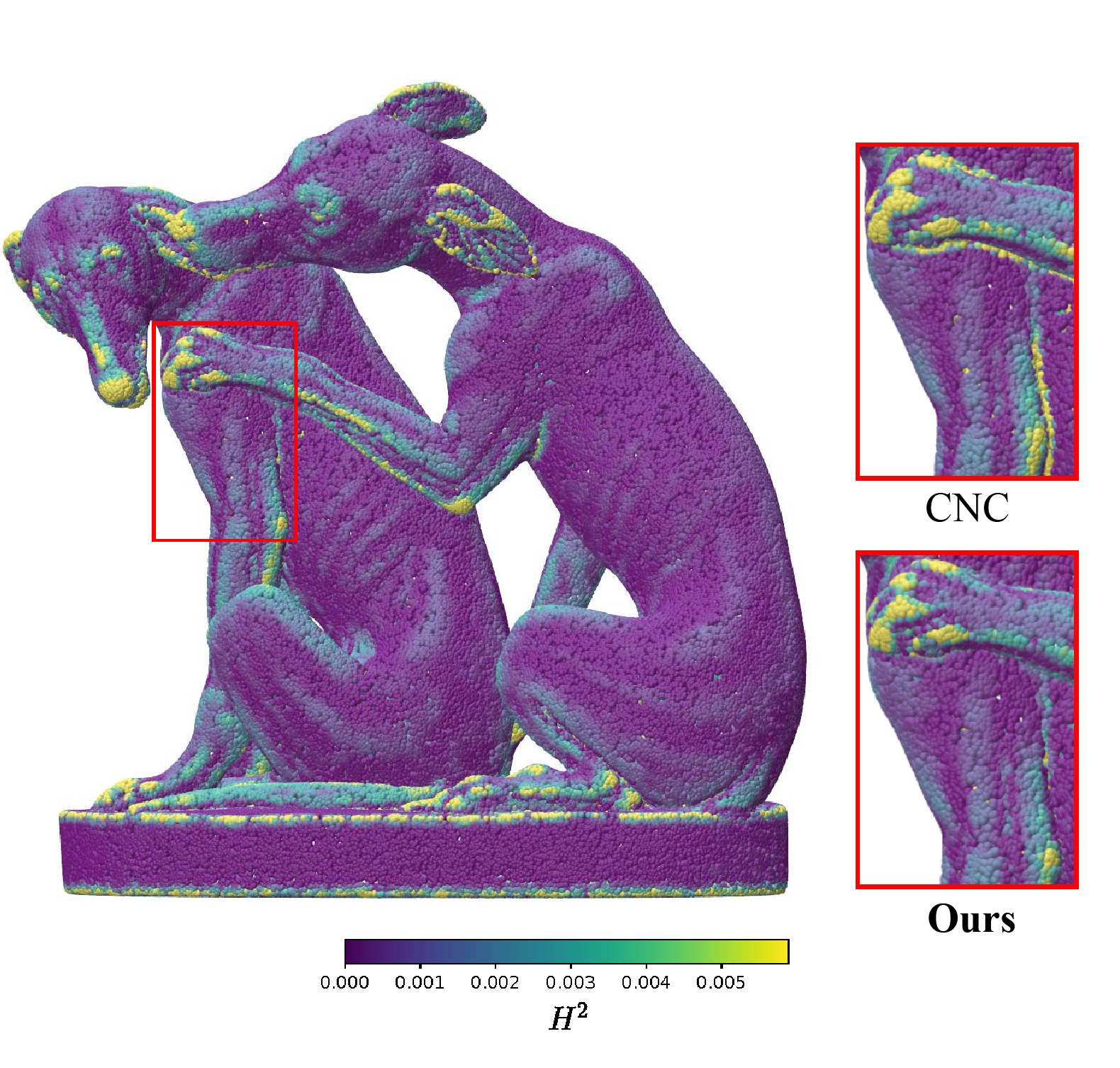}
    \caption{\label{fig:greyounds-curvature}Squared mean curvature $H^2$ on two intersecting greyhound sculptures (ThreeDScans~\cite{threedscans}). Top: blow-up (ours). Bottom: CNC reports elevated curvature along intersection curves (visible as yellow ridges along the paws and legs); the blow-up attributes curvature to each greyhound independently.}
\end{figure}

\begin{figure}
    \centering
    \includegraphics[width=\linewidth]{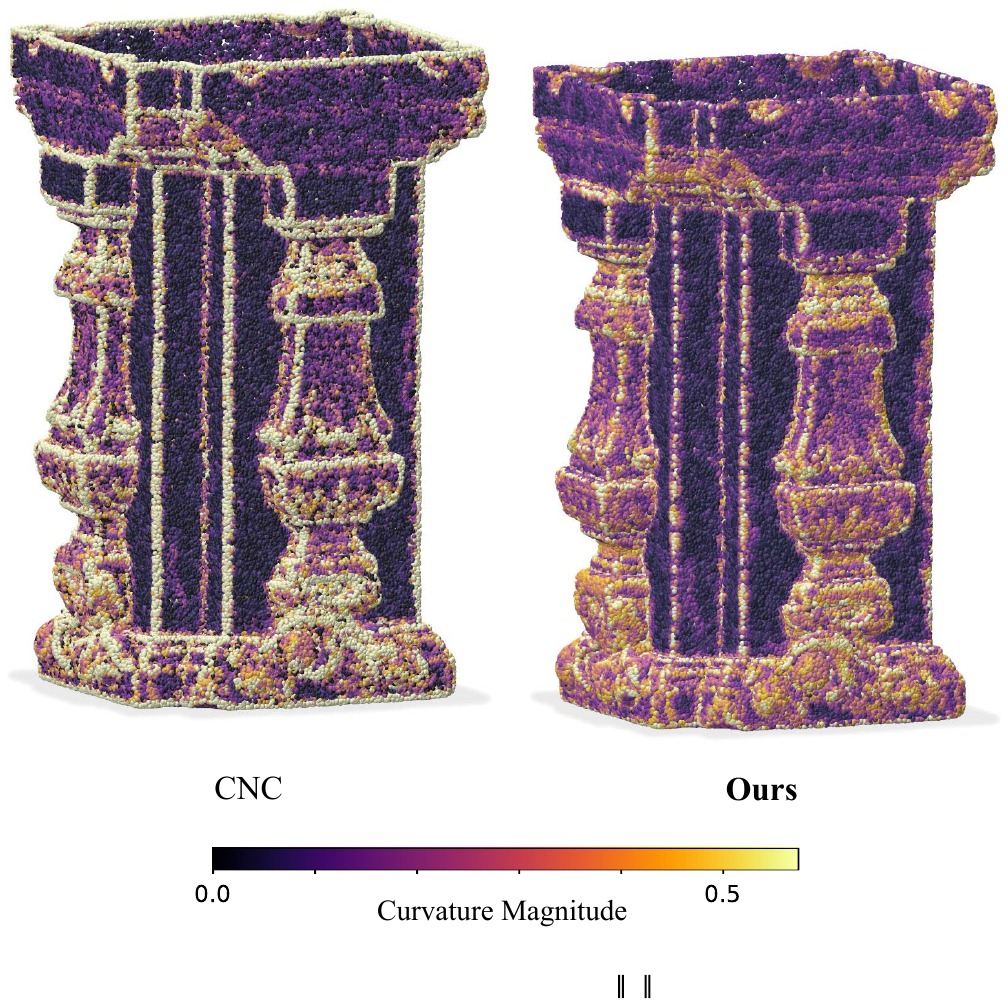}
    \caption{\label{fig:c6-curvature} Curvature magnitude $\|I\!I\|_F$ on a column model (ThreeDScans~\cite{threedscans}) with thin walls. Left: CNC reports high curvature along thin edges where oriented normals reverse between opposite faces. Right: the blow-up estimator reports low curvature at these edges because the tangent plane itself has little rotation.}
\end{figure}

\section{Discussion}
Our formulation assumes that a tangent plane is available at each sample. Future work could address automatically estimating multiple tangents at each point in the point cloud. For example, near singularities, PCA-based tangent estimation averages across intersecting components and can be sensitive to noise. 
This is illustrated in Figure~\ref{fig:tangent-noise}, where the lifted heat method is robust to positional noise but sensitive to estimated tangents.

More specifically, we find that our method is generally robust to positional noise if one has reliable normals, as can be the case in data obtained by photometric scanning, but degrades with respect to increasingly noisy normals. Figure~\ref{fig:noise-study} illustrates this effect, where we have taken a point cloud sampled from a propeller mesh \cite{Thingi10K} and used local PCA to estimate normals. By adding noise to the positions and estimated tangents separately, we see how the lifted heat method is affected: lifted components remain well separated with respect to positional noise, but degrade more rapidly in correspondence to noisy normals. Developing noise-robust variants, especially for scanned data, is an important direction for future work.

A different strategy may be to estimate multiple tangents per point using local mixture-of-subspaces methods such as generalized PCA~\cite{Vidal2005} or polyvector fields~\cite{diamanti2014designing, bessmeltsev2019vectorization}. Each estimated tangent would then generate a separate lifted copy of the same spatial sample, yielding a discrete approximation of the exceptional fiber. The blow-up itself is agnostic to how these tangent planes are obtained.

\begin{figure*}[tbp]
  \centering
  \includegraphics[width=\linewidth]{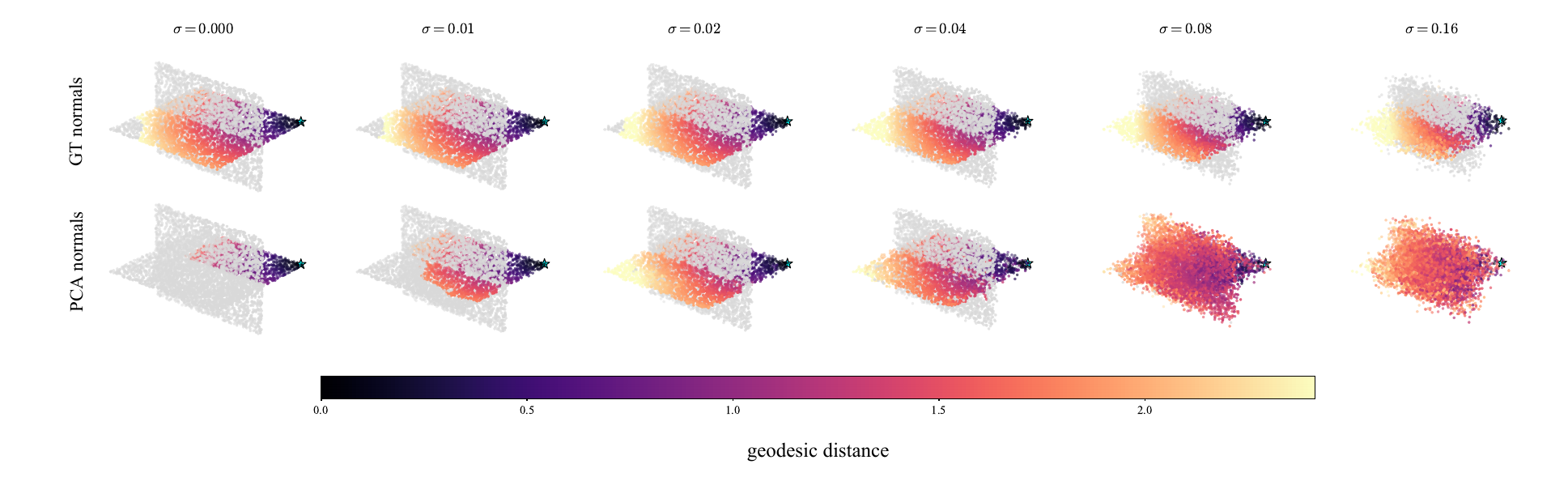}
  \caption{\label{fig:tangent-noise}Geodesic distances on two intersecting planes ($N{=}6000$, $k{=}20$, $\alpha{=}1.0$, $\sigma_x{=}\sigma_u{=}0.5$). Top row: ground-truth tangent frames; bottom row: PCA-estimated tangents ($k_{\mathrm{pca}}{=}30$). Columns show increasing positional noise $\sigma \in \{0, 0.01, 0.02, 0.04, 0.08, 0.16\}$. Grey points are unreachable (infinite geodesic distance), indicating correct sheet separation. The lifted heat method is robust to positional noise, but degrades when tangent estimates are corrupted, as PCA normals near the intersection line conflate intersecting components.}
\end{figure*}

Estimating multiple tangents could also improve the resolution of tangential intersections.
The first lift can convert a tangential intersection into a transverse one, with sheets meeting at an angle determined by their curvature difference. 
Lifted $k$-NN neighborhoods may still include some cross-sheet samples near the singular point. In practice, this contamination is localized to a small neighborhood of the singularity that shrinks as the curvature difference between sheets grows. 
Figure~\ref{fig:teaser} illustrates this, where the second lift achieves good separation despite the tangential contact. 
A more robust approach would involve handling multiple tangents at each blow-up level, which we leave to future work.
Similarly, while the blow-up can be iterated, our experiments mainly validate the first lift and simple second-level examples. A systematic study of higher-order lifts on complex singularities is left to future work.

A second limitation is the growth of ambient dimension under iteration. For surfaces in $\mathbb{R}^3$, the first lift lies in $\mathbb{R}^{12}$ and the second in $\mathbb{R}^{156}$. Although the intrinsic dimension remains unchanged, the projector representation is redundant, since symmetric rank-$d$ projectors have far fewer degrees of freedom than the ambient space. This overhead was still manageable in our experiments, but it appears in the timings. On the $200{,}000$-point \emph{Glykon} example in Figure \ref{fig:teaser}, constructing the level-one and level-two lifts took $38.1\,\mathrm{s}$ and $76.9\,\mathrm{s}$, respectively, while assembling the lifted affinity graph and linear systems for the lifted heat-method increased from about $4.4\,\mathrm{s}$ at level 1 to $30.7\,\mathrm{s}$ at level 2 (all timings measured on an Intel Core i9-9900K CPU with 8 physical cores and 16 logical cores, 48\,GiB of RAM, and an NVIDIA GeForce RTX 3070). In particular, exact kD-tree queries in the lifted ambient coordinates are likely to become a bottleneck as the dimension grows, since their effectiveness degrades in higher dimensions. For higher-order iterations, more compact Grassmannian coordinates, localized lifting near singular regions, or approximate nearest-neighbor methods may therefore be necessary.
For oriented hypersurfaces, a cheaper alternative is to lift using the normal itself.
This avoids the $\mathbb{R}^{n+n^2}$ projector embedding and distinguishes opposite normal orientations. We use projectors because they are orientation-free and extend directly to arbitrary codimension via $\mathrm{Gr}(d,n)$. However, if reliable oriented normals are available, the oriented lift can be substituted.

Beyond the operators explored here, the discrete tangent blow-up may also be useful for tasks such as interpolation and remeshing in the presence of singularities. Extending the framework to mixed-dimensional singularities, and clarifying its connection to higher-order jet constructions, are also natural next steps.\label{sec:discussion}
    
\section{Conclusion}
We introduced the discrete tangent blow-up, a lifted representation for point clouds with non-manifold geometry. By lifting each sample and its tangent plane to $\mathbb{R}^n \times \mathrm{Gr}(d,n)$, the construction separates incident sheets in a metric space where standard Euclidean structures can be applied without combinatorial repair or explicit singularity detection. The projector formulation yields a simple embedding, supports iteration to incorporate higher-order geometric information, and leads to discretized differential operators that can be defined directly in the lifted domain.
We view this work as a step toward geometry processing methods that treat singularities as part of the data.
We hope this perspective opens the door to a broader class of geometry processing algorithms that operate directly on non-manifold geometry.

\begin{figure}
  \centering
  \includegraphics[width=\linewidth]{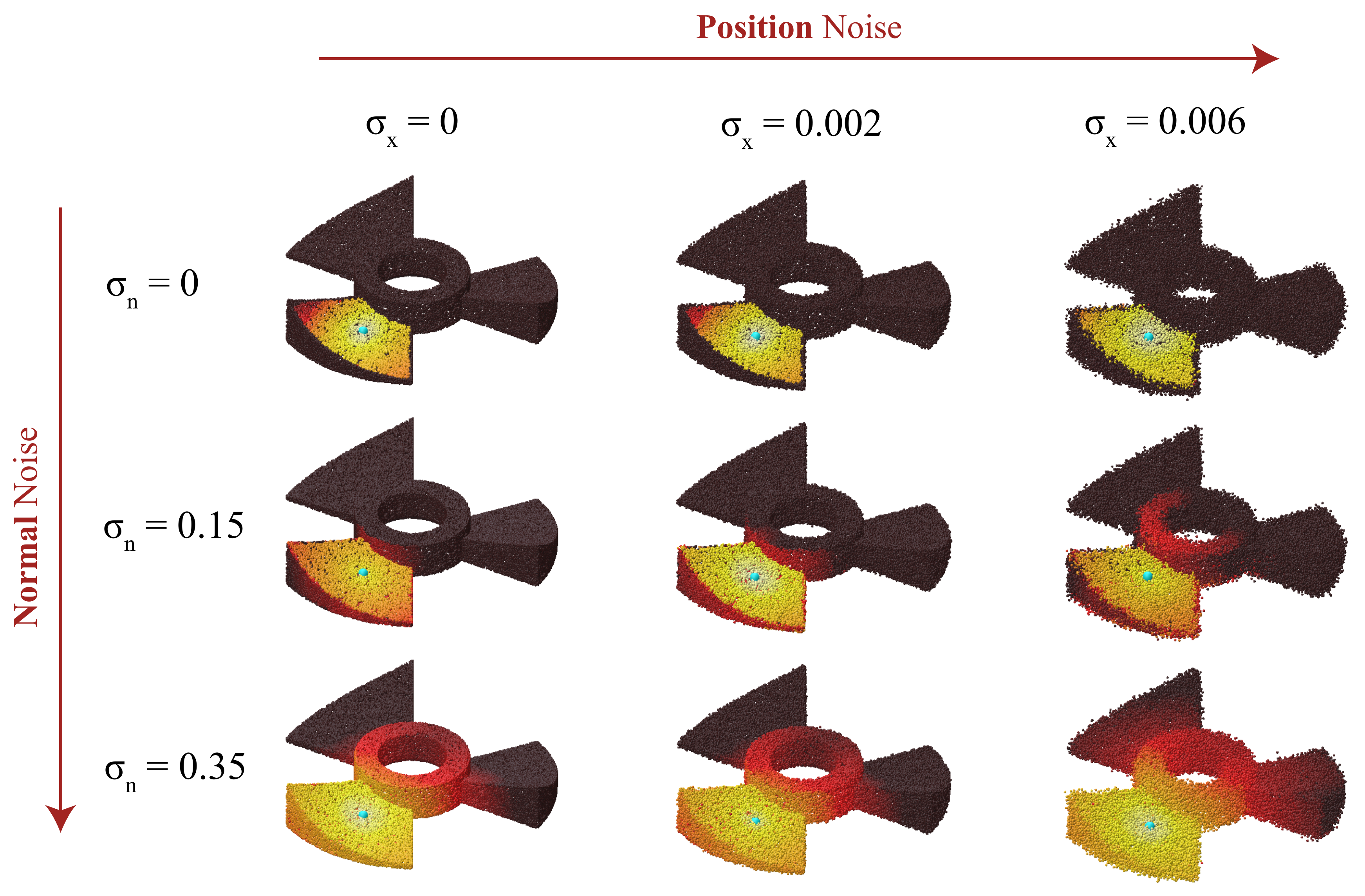}
  \caption{\label{fig:noise-study}Effect of noise on lifted heat diffusion. Columns add increasing Gaussian noise to point positions, reported as fractions of the bounding-box diagonal. Rows add increasing Gaussian noise to normals. Normals are estimated by local PCA from the original point cloud with $k=10$, and the lift uses $\alpha=2.0$. Moderate perturbations preserve the qualitative diffusion pattern, while high combined noise causes visible heat leakage.}
\end{figure}
    

\section*{Acknowledgements}
The MIT Geometric Data Processing Group acknowledges the generous support of National Science Foundation grants IIS2335492 and OAC2403239, from the CSAIL Future of Data and FinTechAI programs, from the MIT--IBM Watson AI Laboratory, from the Wistron Corporation, from the MIT Generative AI Impact Consortium, from the Toyota--CSAIL Joint Research Center, from the Natural Sciences and Engineering Research Council of Canada, and from Schmidt Sciences. 
Additional support was provided by SideFX Software.
Finally, we thank Vidit Nanda for helpful discussions.


\printbibliography

@String{tog = "ACM TOG"}

@article{nash1996arc,
  title={Arc structure of singularities},
  author={Nash Jr, John F},
  journal={Duke Math. J},
  volume={81},
  number={1},
  pages={31--38},
  year={1996}
}

@article{Semple1954,
  title={Some investigations in the geometry of curve and surface elements},
  author={Semple, J Greenlees},
  journal={Proceedings of the London Mathematical Society},
  volume={3},
  number={1},
  pages={24--49},
  year={1954},
  publisher={Wiley Online Library}
}

@article{Nobile1975,
  title={Some properties of the Nash blowing-up},
  author={Nobile, Augusto},
  journal={Pacific Journal of Mathematics},
  volume={60},
  number={1},
  pages={297--305},
  year={1975},
  publisher={Mathematical Sciences Publishers}
}

@inproceedings{GonzalezSprinberg1977,
  title={{\'E}ventails en dimension 2 et transform{\'e} de Nash},
  author={Gonz{\'a}lez-Sprinberg, Gerardo},
  journal={(No Title)},
  year={1977}
}

@incollection{Hironaka1983,
  title={On Nash blowing-up},
  author={Hironaka, Heisuke},
  booktitle={Arithmetic and Geometry: Papers Dedicated to IR Shafarevich on the Occasion of His Sixtieth Birthday. Volume II: Geometry},
  pages={103--111},
  year={1983},
  publisher={Springer}
}

@phdthesis{Rebassoo1977,
  author    = {Rebassoo, Vello},
  title     = {Desingularisation properties of the {N}ash blowing-up process},
  school    = {University of Washington},
  year      = {1977},
}

@article{CastilloEtAl2026,
  title={Nash blowup fails to resolve singularities in dimensions four and higher},
  author={Castillo, Federico and Duarte, Daniel and Leyton-{\'A}lvarez, Maximiliano and Liendo, Alvaro},
  journal={Annals of Mathematics},
  volume={203},
  number={2},
  pages={677--694},
  year={2026},
  publisher={Department of Mathematics, Princeton University Princeton, New Jersey, USA}
}

@article{BuetLeonardiMasnou2017,
  title={A varifold approach to surface approximation},
  author={Buet, Blanche and Leonardi, Gian Paolo and Masnou, Simon},
  journal={Archive for Rational Mechanics and Analysis},
  volume={226},
  number={2},
  pages={639--694},
  year={2017},
  publisher={Springer}
}

@article{BuetLeonardiMasnou2022,
  title={Weak and approximate curvatures of a measure: a varifold perspective},
  author={Buet, Blanche and Leonardi, Gian Paolo and Masnou, Simon},
  journal={Nonlinear Analysis},
  volume={222},
  pages={112983},
  year={2022},
  publisher={Elsevier}
}

@article{BuetRumpf2022,
  title={Mean curvature motion of point cloud varifolds},
  author={Buet, Blanche and Rumpf, Martin},
  journal={ESAIM: Mathematical Modelling and Numerical Analysis},
  volume={56},
  number={5},
  pages={1773--1808},
  year={2022},
  publisher={EDP Sciences}
}

@inproceedings{PottmannEtAl2004,
  title={The isophotic metric and its application to feature sensitive morphology on surfaces},
  author={Pottmann, Helmut and Steiner, Tibor and Hofer, Michael and Haider, Christoph and Hanbury, Allan},
  booktitle={European Conference on Computer Vision},
  pages={560--572},
  year={2004},
  organization={Springer}
}

@article{LaiEtAl2006,
  title={Robust feature classification and editing},
  author={Lai, Yu-Kun and Zhou, Qian-Yi and Hu, Shi-Min and Wallner, Johannes and Pottmann, Helmut},
  journal={IEEE Transactions on Visualization and Computer Graphics},
  volume={13},
  number={1},
  pages={34--45},
  year={2006},
  publisher={IEEE}
}

@inproceedings{LevyBonneel2013,
  title={Variational anisotropic surface meshing with Voronoi parallel linear enumeration},
  author={L{\'e}vy, Bruno and Bonneel, Nicolas},
  booktitle={Proceedings of the 21st international meshing roundtable},
  pages={349--366},
  year={2013},
  publisher={Springer}
}

@inproceedings{CanasGortler2006,
  title={Surface remeshing in arbitrary codimensions},
  author={Canas, Guillermo D and Gortler, Steven J},
  journal={The Visual Computer},
  volume={22},
  number={9},
  pages={885--895},
  year={2006},
  publisher={Springer}
}

@inproceedings{TomasiManduchi1998,
  title={Bilateral filtering for gray and color images},
  author={Tomasi, Carlo and Manduchi, Roberto},
  booktitle={Sixth international conference on computer vision (IEEE Cat. No. 98CH36271)},
  pages={839--846},
  year={1998},
  organization={IEEE}
}

@article{Jones2003,
  title={Non-iterative, feature-preserving mesh smoothing},
  author={Jones, Thouis R and Durand, Fr{\'e}do and Desbrun, Mathieu},
  booktitle={ACM SIGGRAPH 2003 Papers},
  pages={943--949},
  year={2003}
}

@article{Fleishman2003,
  title={Bilateral mesh denoising},
  author={Fleishman, Shachar and Drori, Iddo and Cohen-Or, Daniel},
  booktitle={ACM SIGGRAPH 2003 Papers},
  pages={950--953},
  year={2003}
}

@inproceedings{AndreuxEtAl2014,
  title={Anisotropic Laplace-Beltrami operators for shape analysis},
  author={Andreux, Mathieu and Rodola, Emanuele and Aubry, Mathieu and Cremers, Daniel},
  booktitle={European conference on computer vision},
  pages={299--312},
  year={2014},
  organization={Springer}
}

@inproceedings{BoscainiEtAl2016,
  title={Anisotropic diffusion descriptors},
  author={Boscaini, Davide and Masci, Jonathan and Rodol{\`a}, Emanuele and Bronstein, Michael M and Cremers, Daniel},
  booktitle={Computer Graphics Forum},
  volume={35},
  number={2},
  pages={431--441},
  year={2016},
  organization={Wiley Online Library}
}

@article{tinarrage2023recovering,
  title={Recovering the homology of immersed manifolds},
  author={Tinarrage, Rapha{\"e}l},
  journal={Discrete \& Computational Geometry},
  volume={69},
  number={3},
  pages={659--744},
  year={2023},
  publisher={Springer}
}

@article{charton2021mesh,
  title={Mesh repairing using topology graphs},
  author={Charton, Jerome and Baek, Stephen and Kim, Youngjun},
  journal={Journal of Computational Design and Engineering},
  volume={8},
  number={1},
  pages={251--267},
  year={2021},
  publisher={Oxford University Press}
}

@inproceedings{ValqueLazard2025,
  title={Resolving self-intersections in 3D meshes while preserving floating-point coordinates},
  author={Valque, L{\'e}o and Lazard, Sylvain},
  booktitle={Computer Graphics Forum},
  volume={44},
  number={5},
  pages={e70197},
  year={2025},
  organization={Wiley Online Library}
}

@article{ZhouEtAl2016,
  title={Mesh arrangements for solid geometry},
  author={Zhou, Qingnan and Grinspun, Eitan and Zorin, Denis and Jacobson, Alec},
  journal={ACM Transactions on Graphics (TOG)},
  volume={35},
  number={4},
  pages={1--15},
  year={2016},
  publisher={ACM New York, NY, USA}
}

@inproceedings{von2023topological,
  title={Topological singularity detection at multiple scales},
  author={Von Rohrscheidt, Julius and Rieck, Bastian},
  booktitle={International Conference on Machine Learning},
  pages={35175--35197},
  year={2023},
  organization={PMLR}
}

@article{lim2025hades,
  title={Hades: Fast singularity detection with local measure comparison},
  author={Lim, Uzu and Oberhauser, Harald and Nanda, Vidit},
  journal={SIAM Journal on Mathematics of Data Science},
  volume={7},
  number={4},
  pages={1882--1903},
  year={2025},
  publisher={SIAM}
}

@inproceedings{sharp2020laplacian,
  title={A laplacian for nonmanifold triangle meshes},
  author={Sharp, Nicholas and Crane, Keenan},
  booktitle={Computer Graphics Forum},
  volume={39},
  number={5},
  pages={69--80},
  year={2020},
  organization={Wiley Online Library}
}

@article{attene2013polygon,
  title={Polygon mesh repairing: An application perspective},
  author={Attene, Marco and Campen, Marcel and Kobbelt, Leif},
  journal={ACM Computing Surveys (CSUR)},
  volume={45},
  number={2},
  pages={1--33},
  year={2013},
  publisher={ACM New York, NY, USA}
}

@article{CazalsPouget2005,
  title={Estimating differential quantities using polynomial fitting of osculating jets},
  author={Cazals, Fr{\'e}d{\'e}ric and Pouget, Marc},
  journal={Computer aided geometric design},
  volume={22},
  number={2},
  pages={121--146},
  year={2005},
  publisher={Elsevier}
}

@article{LachaudCoeurjolly2023,
  title={Lightweight curvature estimation on point clouds with randomized corrected curvature measures},
  author={Lachaud, J-O and Coeurjolly, David and Labart, C{\'e}line and Romon, Pascal and Thibert, Boris},
  booktitle={Computer Graphics Forum},
  volume={42},
  number={5},
  pages={e14910},
  year={2023},
  organization={Wiley Online Library}
}

@article{LachaudRomonThibert2022,
  title={Corrected curvature measures},
  author={Lachaud, Jacques-Olivier and Romon, Pascal and Thibert, Boris},
  journal={Discrete \& Computational Geometry},
  volume={68},
  number={2},
  pages={477--524},
  year={2022},
  publisher={Springer}
}

@inproceedings{Rusinkiewicz2004,
  title={Estimating curvatures and their derivatives on triangle meshes},
  author={Rusinkiewicz, Szymon},
  booktitle={Proceedings. 2nd International Symposium on 3D Data Processing, Visualization and Transmission, 2004. 3DPVT 2004.},
  pages={486--493},
  year={2004},
  organization={IEEE}
}

@article{cao2021efficient,
  title={Efficient Weingarten map and curvature estimation on manifolds},
  author={Cao, Yueqi and Li, Didong and Sun, Huafei and Assadi, Amir H and Zhang, Shiqiang},
  journal={Machine Learning},
  volume={110},
  number={6},
  pages={1319--1344},
  year={2021},
  publisher={Springer}
}

@inproceedings{bendich2010stratification,
  title={Stratification Learning through Homology Inference.},
  author={Bendich, Paul and Mukherjee, Sayan and Wang, Bei},
  booktitle={AAAI Fall Symposium: Manifold Learning and Its Applications},
  year={2010}
}

@incollection{goresky1988stratified,
  title={Stratified morse theory},
  author={Goresky, Mark and MacPherson, Robert},
  booktitle={Stratified Morse Theory},
  pages={3--22},
  year={1988},
  publisher={Springer}
}

@book{friedman2020singular,
  title={Singular intersection homology},
  author={Friedman, Greg},
  volume={33},
  year={2020},
  publisher={Cambridge University Press}
}

@article{Bendich2011,
  title={Persistent intersection homology},
  author={Bendich, Paul and Harer, John},
  journal={Foundations of Computational Mathematics},
  volume={11},
  number={3},
  pages={305--336},
  year={2011},
  publisher={Springer}
}

@article{Nanda2020,
  title={Local cohomology and stratification},
  author={Nanda, Vidit},
  journal={Foundations of Computational Mathematics},
  volume={20},
  number={2},
  pages={195--222},
  year={2020},
  publisher={Springer}
}

@article{Stolz2020,
  title={Geometric anomaly detection in data},
  author={Stolz, Bernadette J and Tanner, Jared and Harrington, Heather A and Nanda, Vidit},
  journal={Proceedings of the national academy of sciences},
  volume={117},
  number={33},
  pages={19664--19669},
  year={2020},
  publisher={National Academy of Sciences}
}

@article{Vidal2005,
  title={Generalized principal component analysis (GPCA)},
  author={Vidal, Rene and Ma, Yi and Sastry, Shankar},
  journal={IEEE transactions on pattern analysis and machine intelligence},
  volume={27},
  number={12},
  pages={1945--1959},
  year={2005},
  publisher={IEEE}
}

@inproceedings{Bendich2012,
  title={Local homology transfer and stratification learning},
  author={Bendich, Paul and Wang, Bei and Mukherjee, Sayan},
  booktitle={Proceedings of the twenty-third annual ACM-SIAM symposium on Discrete Algorithms},
  pages={1355--1370},
  year={2012},
  organization={SIAM}
}

@article{Haro2006,
  title={Stratification learning: Detecting mixed density and dimensionality in high dimensional point clouds},
  author={Haro, Gloria and Randall, Gregory and Sapiro, Guillermo},
  journal={Advances in Neural Information Processing Systems},
  volume={19},
  year={2006}
}

@article{aamari2024theory,
  title={A theory of stratification learning},
  author={Aamari, Eddie and Berenfeld, Cl{\'e}ment},
  journal={arXiv preprint arXiv:2405.20066},
  year={2024}
}

@article{marshall2008effects,
  title={Effects of nonplanar fault topology and mechanical interaction on fault-slip distributions in the Ventura Basin, California},
  author={Marshall, Scott T and Cooke, Michele L and Owen, Susan E},
  journal={Bulletin of the Seismological Society of America},
  volume={98},
  number={3},
  pages={1113--1127},
  year={2008},
  publisher={Seismological Society of America}
}

@article{edelman1998geometry,
  title={The geometry of algorithms with orthogonality constraints},
  author={Edelman, Alan and Arias, Tom{\'a}s A and Smith, Steven T},
  journal={SIAM journal on Matrix Analysis and Applications},
  volume={20},
  number={2},
  pages={303--353},
  year={1998},
  publisher={SIAM}
}

@book{golub2013matrix,
  title={Matrix computations},
  author={Golub, Gene H and Van Loan, Charles F},
  year={2013},
  publisher={JHU press}
}

@book{boumal2023introduction,
  title={An introduction to optimization on smooth manifolds},
  author={Boumal, Nicolas},
  year={2023},
  publisher={Cambridge University Press}
}

@article{bendokat2024grassmann,
  title={A Grassmann manifold handbook: Basic geometry and computational aspects},
  author={Bendokat, Thomas and Zimmermann, Ralf and Absil, P-A},
  journal={Advances in Computational Mathematics},
  volume={50},
  number={1},
  pages={6},
  year={2024},
  publisher={Springer}
}

@article{conway1996packing,
  title={Packing lines, planes, etc.: Packings in Grassmannian spaces},
  author={Conway, John H and Hardin, Ronald H and Sloane, Neil JA},
  journal={Experimental mathematics},
  volume={5},
  number={2},
  pages={139--159},
  year={1996},
  publisher={Taylor \& Francis}
}

@article{Thingi10K,
  title={Thingi10K: A Dataset of 10,000 3D-Printing Models},
  author={Zhou, Qingnan and Jacobson, Alec},
  journal={arXiv preprint arXiv:1605.04797},
  year={2016}
}

@misc{threedscans,
  author       = {Laric, Oliver},
  title        = {Three D Scans},
  howpublished = {\url{http://threedscans.com/}},
  note         = {Accessed: 2026-04},
  year         = {2012}
}

@inproceedings{hua2016scenenn,
  title={Scenenn: A scene meshes dataset with annotations},
  author={Hua, Binh-Son and Pham, Quang-Hieu and Nguyen, Duc Thanh and Tran, Minh-Khoi and Yu, Lap-Fai and Yeung, Sai-Kit},
  booktitle={2016 fourth international conference on 3D vision (3DV)},
  pages={92--101},
  year={2016},
  organization={Ieee}
}

@article{belkin2006convergence,
  title={Convergence of Laplacian eigenmaps},
  author={Belkin, Mikhail and Niyogi, Partha},
  journal={Advances in neural information processing systems},
  volume={19},
  year={2006}
}

@inproceedings{belkin2009constructing,
  title={Constructing Laplace operator from point clouds in Rd},
  author={Belkin, Mikhail and Sun, Jian and Wang, Yusu},
  booktitle={Proceedings of the twentieth annual ACM-SIAM symposium on Discrete algorithms},
  pages={1031--1040},
  year={2009},
  organization={SIAM}
}

@inproceedings{hoppe1992surface,
  title={Surface reconstruction from unorganized points},
  author={Hoppe, Hugues and DeRose, Tony and Duchamp, Tom and McDonald, John and Stuetzle, Werner},
  booktitle={Proceedings of the 19th annual conference on computer graphics and interactive techniques},
  pages={71--78},
  year={1992}
}

@book{robbin2022introduction,
  title={Introduction to differential geometry},
  author={Robbin, Joel W and Salamon, Dietmar A},
  year={2022},
  publisher={Springer Nature}
}

@inproceedings{vaxman2016directional,
  title={Directional field synthesis, design, and processing},
  author={Vaxman, Amir and Campen, Marcel and Diamanti, Olga and Panozzo, Daniele and Bommes, David and Hildebrandt, Klaus and Ben-Chen, Mirela},
  booktitle={Computer graphics forum},
  volume={35},
  number={2},
  pages={545--572},
  year={2016},
  organization={Wiley Online Library}
}

@incollection{lee2003smooth,
  title={Smooth manifolds},
  author={Lee, John M},
  booktitle={Introduction to smooth manifolds},
  pages={1--29},
  year={2003},
  publisher={Springer}
}

@article{coifman2006diffusion,
  title={Diffusion maps},
  author={Coifman, Ronald R and Lafon, St{\'e}phane},
  journal={Applied and computational harmonic analysis},
  volume={21},
  number={1},
  pages={5--30},
  year={2006},
  publisher={Elsevier}
}

@article{zelnik2004self,
  title={Self-tuning spectral clustering},
  author={Zelnik-Manor, Lihi and Perona, Pietro},
  journal={Advances in neural information processing systems},
  volume={17},
  year={2004}
}

@article{liang2013solving,
  title={Solving partial differential equations on point clouds},
  author={Liang, Jian and Zhao, Hongkai},
  journal={SIAM Journal on Scientific Computing},
  volume={35},
  number={3},
  pages={A1461--A1486},
  year={2013},
  publisher={SIAM}
}

@article{wiersma2022deltaconv,
  title={Deltaconv: anisotropic operators for geometric deep learning on point clouds},
  author={Wiersma, Ruben and Nasikun, Ahmad and Eisemann, Elmar and Hildebrandt, Klaus},
  journal={ACM Transactions on Graphics (ToG)},
  volume={41},
  number={4},
  pages={1--10},
  year={2022},
  publisher={ACM New York, NY, USA}
}

@article{nealen2004short,
  title={An as-short-as-possible introduction to the least squares, weighted least squares and moving least squares methods for scattered data approximation and interpolation},
  author={Nealen, Andrew},
  journal={URL: http://www. nealen. com/projects},
  volume={130},
  number={150},
  pages={25},
  year={2004}
}

@article{crane2013geodesics,
  title={Geodesics in heat: A new approach to computing distance based on heat flow},
  author={Crane, Keenan and Weischedel, Clarisse and Wardetzky, Max},
  journal={ACM Transactions on Graphics (ToG)},
  volume={32},
  number={5},
  pages={1--11},
  year={2013},
  publisher={ACM New York, NY, USA}
}

@article{von2007tutorial,
  title={A tutorial on spectral clustering},
  author={Von Luxburg, Ulrike},
  journal={Statistics and computing},
  volume={17},
  number={4},
  pages={395--416},
  year={2007},
  publisher={Springer}
}

@article{ng2001spectral,
  title={On spectral clustering: Analysis and an algorithm},
  author={Ng, Andrew and Jordan, Michael and Weiss, Yair},
  journal={Advances in neural information processing systems},
  volume={14},
  year={2001}
}

@inproceedings{ester1996density,
  title={A density-based algorithm for discovering clusters in large spatial databases with noise},
  author={Ester, Martin and Kriegel, Hans-Peter and Sander, J{\"o}rg and Xu, Xiaowei and others},
  booktitle={kdd},
  volume={96},
  number={34},
  pages={226--231},
  year={1996}
}

@article{belkin2003laplacian,
  title={Laplacian eigenmaps for dimensionality reduction and data representation},
  author={Belkin, Mikhail and Niyogi, Partha},
  journal={Neural computation},
  volume={15},
  number={6},
  pages={1373--1396},
  year={2003},
  publisher={MIT Press}
}

@book{do2016differential,
  title={Differential geometry of curves and surfaces: revised and updated second edition},
  author={Do Carmo, Manfredo P},
  year={2016},
  publisher={Courier Dover Publications}
}

@book{abbena2017modern,
  title={Modern differential geometry of curves and surfaces with Mathematica},
  author={Abbena, Elsa and Salamon, Simon and Gray, Alfred},
  year={2017},
  publisher={Chapman and Hall/CRC}
}

@article{petit1981representation,
  title={Une repr{\'e}sentation analytique de la surface de Boy},
  author={Petit, Jean-Pierre and Souriau, J},
  journal={Compte Rendus de l’Acad{\'e}mie des Sciences de Paris},
  volume={293},
  pages={5},
  year={1981}
}

@misc{Busser-nonorientable,
  author       = {Busser, Alain},
  title        = {Non-Orientable Surfaces: Parametric Representations},
  year         = {2012},
  howpublished = {\url{https://eumat.sourceforge.net/Programs/Examples/Non-Orientable\%20Surfaces.html}},
  note         = {Accessed April 2026}
}

@article{banchoff1976minimal,
  title={Minimal submanifolds of the bicylinder boundary},
  author={Banchoff, Thomas F},
  journal={Boletim da Sociedade Brasileira de Matem{\'a}tica-Bulletin/Brazilian Mathematical Society},
  volume={7},
  number={1},
  pages={37--57},
  year={1976},
  publisher={Springer}
}

@inproceedings{hoppe1994piecewise,
  title={Piecewise smooth surface reconstruction},
  author={Hoppe, Hugues and DeRose, Tony and Duchamp, Tom and Halstead, Mark and Jin, Hubert and McDonald, John and Schweitzer, Jean and Stuetzle, Werner},
  booktitle={Proceedings of the 21st annual conference on Computer graphics and interactive techniques},
  pages={295--302},
  year={1994}
}

@article{bessmeltsev2019vectorization,
  title={Vectorization of line drawings via polyvector fields},
  author={Bessmeltsev, Mikhail and Solomon, Justin},
  journal={ACM Transactions on Graphics (TOG)},
  volume={38},
  number={1},
  pages={1--12},
  year={2019},
  publisher={ACM New York, NY, USA}
}

@inproceedings{diamanti2014designing,
  title={Designing N-PolyVector fields with complex polynomials},
  author={Diamanti, Olga and Vaxman, Amir and Panozzo, Daniele and Sorkine-Hornung, Olga},
  booktitle={Computer Graphics Forum},
  volume={33},
  number={5},
  pages={1--11},
  year={2014},
  organization={Wiley Online Library}
}

@article{kambhatla1997dimension,
  title={Dimension reduction by local principal component analysis},
  author={Kambhatla, Nandakishore and Leen, Todd K},
  journal={Neural computation},
  volume={9},
  number={7},
  pages={1493--1516},
  year={1997},
  publisher={MIT Press One Rogers Street, Cambridge, MA 02142-1209, USA journals-info~…}
}

@article{li2022hsurf,
  title={HSurf-Net: Normal estimation for 3D point clouds by learning hyper surfaces},
  author={Li, Qing and Liu, Yu-Shen and Cheng, Jin-San and Wang, Cheng and Fang, Yi and Han, Zhizhong},
  journal={Advances in Neural Information Processing Systems},
  volume={35},
  pages={4218--4230},
  year={2022}
}

@inproceedings{li2025high,
  title={High-quality Point Cloud Oriented Normal Estimation via Hybrid Angular and Euclidean Distance Encoding},
  author={Li, Yuanqi and Huang, Jingcheng and Wang, Hongshen and Lv, Peiyuan and Liu, Yansong and Zheng, Jiuming and Guo, Jie and Guo, Yanwen},
  booktitle={Proceedings of the Computer Vision and Pattern Recognition Conference},
  pages={1287--1296},
  year={2025}
}

@inproceedings{kohlbrenner2025symmetrized,
  title={Symmetrized Poisson Reconstruction},
  author={Kohlbrenner, Maximilian and Liu, Hongyi and Alexa, Marc and Kazhdan, Misha},
  booktitle={Computer Graphics Forum},
  volume={44},
  number={5},
  pages={e70210},
  year={2025},
  organization={Wiley Online Library}
}

@inproceedings{yang2025edgemovingnet,
  title={EdgeMovingNet: Edge-preserving Point Cloud Reconstruction via Joint Geometry Features},
  author={Yang, Xinran and Ji, Donghao and Li, Yuanqi and Xie, Junyuan and Guo, Jie and Guo, Yanwen},
  booktitle={Proceedings of the Computer Vision and Pattern Recognition Conference},
  pages={22150--22160},
  year={2025}
}

@inproceedings{arnal2026survey,
  title={Survey on differential estimators for 3d point clouds},
  author={Arnal--Anger, L{\'e}o and Lejemble, Thibault and Coeurjolly, David and Barthe, Lo{\"\i}c and Mellado, Nicolas},
  booktitle={Computer Graphics Forum},
  pages={e70394},
  year={2026},
  organization={Wiley Online Library}
}

@inproceedings{erler2020points2surf,
  title={Points2surf learning implicit surfaces from point clouds},
  author={Erler, Philipp and Guerrero, Paul and Ohrhallinger, Stefan and Mitra, Niloy J and Wimmer, Michael},
  booktitle={European conference on computer vision},
  pages={108--124},
  year={2020},
  organization={Springer}
}

@inproceedings{ma2021neural,
  title={Neural-Pull: Learning Signed Distance Function from Point clouds by Learning to Pull Space onto Surface},
  author={Ma, Baorui and Han, Zhizhong and Liu, Yu-Shen and Zwicker, Matthias},
  booktitle={International Conference on Machine Learning},
  pages={7246--7257},
  year={2021},
  organization={PMLR}
}

@inproceedings{guerrero2018pcpnet,
  title={Pcpnet learning local shape properties from raw point clouds},
  author={Guerrero, Paul and Kleiman, Yanir and Ovsjanikov, Maks and Mitra, Niloy J},
  booktitle={Computer graphics forum},
  volume={37},
  number={2},
  pages={75--85},
  year={2018},
  organization={Wiley Online Library}
}

@inproceedings{ben2020deepfit,
  title={DeepFit: 3D surface fitting via neural network weighted least squares},
  author={Ben-Shabat, Yizhak and Gould, Stephen},
  booktitle={European conference on computer vision},
  pages={20--34},
  year={2020},
  organization={Springer}
}

\section*{Appendix A: Proofs of Separation and Smoothness Theorems}

\begin{theorem}[Lifted Separation]\label{sec:separation-proof} 
    Let $M_1, M_2 \subset \mathbb{R}^n$ be smooth $d$-dimensional
    submanifolds intersecting at a point $x_0$ with distinct tangent
    planes. Write $P_{0}^{(1)}, P_{0}^{(2)}$ for their respective
    tangent projectors at $x_0$, and let
    \[
      \delta \;=\;
        d_{\mathrm{chord}}\!\bigl(T_{x_0}M_1,\;T_{x_0}M_2\bigr)
      \;>\; 0
    \]
    be the chordal distance between the two tangent planes. Then
    there exists $\varepsilon_0 > 0$ (depending on~$\delta$ and
    the second fundamental forms of $M_1, M_2$ near~$x_0$) such
    that for all $p \in M_1$, $q \in M_2$ with
    $\|p - x_0\|,\,\|q - x_0\| < \varepsilon_0$,
    \[
      d_{\mathcal{L}}\!\bigl(\tilde{p},\,\tilde{q}\bigr)
      \;\geq\;
      \frac{\sqrt{\alpha}\;\delta}{\sqrt{2}}
      \;>\; 0,
    \]
    where $\tilde{p} = (p,\,T_pM_1)$, $\tilde{q} = (q,\,T_qM_2)$
    are the lifted points and~$d_{\mathcal{L}}$ is the lifted
    metric.
\end{theorem}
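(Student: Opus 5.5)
The plan is to discard the spatial term and work only with the Grassmannian term of the lifted metric. By \eqref{eq:dist} and the definition of the chordal distance, $\tfrac{\alpha}{2}\|P_U-P_V\|_F^2=\alpha\,d_{\mathrm{chord}}(U,V)^2$. Hence for any $p\in M_1$, $q\in M_2$,
\[
  d_{\mathcal{L}}(\tilde p,\tilde q)\;\geq\;\sqrt{\alpha}\;d_{\mathrm{chord}}\bigl(T_pM_1,\,T_qM_2\bigr).
\]
The claimed bound $\sqrt{\alpha}\,\delta/\sqrt{2}$ then follows once we show $d_{\mathrm{chord}}(T_pM_1,T_qM_2)\geq \delta/\sqrt{2}$ for $p,q$ close enough to $x_0$. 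The spatial term $\|p-q\|^2$ is never needed. This is why the bound survives as $\|p-q\|\to 0$.

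\textbf{Reduction to a perturbation estimate.} The chordal distance is a metric on $\mathrm{Gr}(d,n)$, since it is $\tfrac{1}{\sqrt2}$ times a Frobenius norm of differences. The reverse triangle inequality therefore gives
\[
  d_{\mathrm{chord}}(T_pM_1,T_qM_2)\;\geq\;\delta
  - \tfrac{1}{\sqrt2}\bigl\|P_p-P_0^{(1)}\bigr\|_F
  - \tfrac{1}{\sqrt2}\bigl\|P_q-P_0^{(2)}\bigr\|_F ,
\]
where $P_p, P_q$ denote the tangent projectors of $M_1$ at $p$ and of $M_2$ at $q$. It thus suffices to choose $\varepsilon_0$ so that each perturbation term is at most $\tfrac{1}{2}\bigl(1-\tfrac{1}{\sqrt2}\bigr)\delta$.

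\textbf{Lipschitz control of the projector field.} I will show that on each sheet, in a neighborhood of $x_0$,
\[
  \|P_p-P_0^{(a)}\|_F\leq L_a\,\|p-x_0\|,
\]
with $L_a$ controlled by a bound on the second fundamental form of $M_a$ near $x_0$. The steps are as follows.
\begin{enumerate}
\item Take a compact neighborhood of $x_0$ in $M_a$ on which $\|\mathrm{II}\|\leq \kappa_a$.
\item By the projector-derivative discussion of \S\ref{sec:projector-derivative}, $\mathcal{P}_v=N B_x(v)U^\top+UB_x(v)^\top N^\top$. Hence $\|\mathcal{P}_v\|_F=\sqrt2\,\|B_x(v)\|_F\leq \sqrt{2}\,\kappa_a'\,\|v\|$, where $\kappa_a'$ is a bound on $\|\mathrm{II}\|$ in the norm induced by $\|B_x(v)\|_F$. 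This constant is comparable to $\kappa_a$ up to a dimensional factor.
\item Connect $x_0$ to $p$ by a curve $c$ in $M_a$. Integrating gives $\|P_p-P_0^{(a)}\|_F\leq \sqrt2\,\kappa_a'\,\mathrm{length}(c)$.
\end{enumerate}

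\textbf{Main obstacle: bounding the curve length.} The step I expect to be hardest is comparing the intrinsic length of $c$ with the extrinsic distance $\|p-x_0\|$. My approach is to write $M_a$ locally as the graph of a $C^2$ map $h_a$ over $T_{x_0}M_a$, with $h_a(0)=0$ and $Dh_a(0)=0$. I will shrink the neighborhood until $\|Dh_a\|\leq 1$. Then the lifted straight segment in the tangent plane gives a curve of length at most $\sqrt2\,\|p-x_0\|$. Choosing $\varepsilon_0$ smaller than the radius of both graph neighborhoods yields
\[
  \|P_p-P_0^{(a)}\|_F\leq 2\kappa_a'\,\|p-x_0\|.
\]
Finally I will set
\[
  \varepsilon_0=\min\Bigl\{r_1,\;r_2,\;\frac{(\sqrt2-1)\,\delta}{4\max(\kappa_1',\kappa_2')}\Bigr\},
\]
where $r_1,r_2$ are the graph radii, so that each perturbation term is at most $\tfrac12\bigl(1-\tfrac1{\sqrt2}\bigr)\delta$ as required. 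The same argument works with any $C^2$ regularity, since only boundedness of $\mathrm{II}$ near $x_0$ is used.
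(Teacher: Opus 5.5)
Your proposal is correct and follows essentially the same route as the paper. You drop the spatial term, bound $\|P_x^{(a)}-P_0^{(a)}\|_F$ using $\|\mathcal{P}_v\|_F=\sqrt{2}\,\|B_x(v)\|_F$ integrated along a curve in the sheet, compare intrinsic length with $\|x-x_0\|$, and close with the reverse triangle inequality at the same threshold $\|P_p-P_q\|_F\geq\delta$. Your graph-chart argument, with $r_a$ taken small enough that $M_a\cap B(x_0,r_a)$ lies inside the chart (which embeddedness guarantees), justifies the length comparison that the paper only asserts as $d_{M_k}(x,x_0)\leq 2\|x-x_0\|$.
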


\begin{proof}
Write $P_x^{(k)}$ for the tangent projector of $M_k$ at
$x \in M_k$, and $P_0^{(k)} = P_{x_0}^{(k)}$ for its value
at $x_0$, for $k = 1, 2$.

\medskip\noindent\textbf{Step 1 (Projector Lipschitz bound).}
Since each $M_k$ is smooth, the projector map
$x \mapsto P_x^{(k)}$ is smooth on $M_k$. 
Its directional derivative along a tangent vector $v$ is
the projector derivative $\mathcal{P}_v^{(k)}$, 
which decomposes as
$\mathcal{P}_v^{(k)} = N^{(k)} B^{(k)}(v)\,(U^{(k)})^\top
+ U^{(k)} B^{(k)}(v)^\top (N^{(k)})^\top$.
The two summands have orthogonal column spaces
($\mathrm{col}(N^{(k)})$ and $\mathrm{col}(U^{(k)})$),
so we can use $\|A + B\|_F^2 = \|A\|_F^2 + \|B\|_F^2$
(the trace $\mathrm{tr}(A^\top B) = 0$).
Multiplying by matrices with orthonormal columns preserves
the Frobenius norm, so each summand contributes
$\|B^{(k)}(v)\|_F$, giving
\[
  \|\mathcal{P}_v^{(k)}\|_F^2
  \;=\;
  2\,\|B^{(k)}(v)\|_F^2
  \;\leq\;
  2\,L_k^2\,\|v\|^2,
\]
where
$L_k = \max\{\|B_x^{(k)}(v)\|_F : x \in M_k,\;
\|x - x_0\| \leq \varepsilon_0,\;\|v\| = 1\}$
is finite by smoothness and compactness.
By the mean value inequality,
\[
  \bigl\|P_x^{(k)} - P_0^{(k)}\bigr\|_F
  \;\leq\;
  \sqrt{2}\;L_k\;d_{M_k}(x, x_0),
\]
where $d_{M_k}$ is the geodesic distance on $M_k$.
For $\varepsilon_0$ sufficiently small,
$d_{M_k}(x, x_0) \leq 2\,\|x - x_0\|$, so setting
$C_k = 2\sqrt{2}\,L_k$ gives
$\|P_x^{(k)} - P_0^{(k)}\|_F \leq C_k\,\|x - x_0\|$.

\medskip\noindent\textbf{Step 2 (Triangle inequality on
projectors).} For $p \in M_1$ and $q \in M_2$ with
$\|p - x_0\|,\,\|q - x_0\| < \varepsilon_0$, set
$C = C_1 + C_2$:
\begin{align*}
  \bigl\|P_p^{(1)} - P_q^{(2)}\bigr\|_F
  &\geq \bigl\|P_0^{(1)} - P_0^{(2)}\bigr\|_F
    - \bigl\|P_p^{(1)} - P_0^{(1)}\bigr\|_F
    - \bigl\|P_q^{(2)} - P_0^{(2)}\bigr\|_F \\
  &\geq \sqrt{2}\,\delta - C\,\varepsilon_0,
\end{align*}
where we used
$\|P_0^{(1)} - P_0^{(2)}\|_F = \sqrt{2}\,\delta$,
the standard identity relating the projector Frobenius
distance to the chordal distance on the Grassmannian.

\medskip\noindent\textbf{Step 3 (Lifted distance).}
The squared lifted distance satisfies
\[
  d_{\mathcal{L}}^2(\tilde{p},\,\tilde{q})
  \;=\; \|p - q\|^2
    + \tfrac{\alpha}{2}
      \bigl\|P_p^{(1)} - P_q^{(2)}\bigr\|_F^2
  \;\geq\;
  \tfrac{\alpha}{2}
    \bigl(\sqrt{2}\,\delta - C\,\varepsilon_0\bigr)^2.
\]
Choosing
$\varepsilon_0 < (\sqrt{2} - 1)\,\delta\,/\,C$
ensures $\sqrt{2}\,\delta - C\,\varepsilon_0 > \delta$,
so that
\[
  d_{\mathcal{L}}(\tilde{p},\,\tilde{q})
  \;\geq\;
  \sqrt{\tfrac{\alpha}{2}}\;\delta
  \;=\;
  \frac{\sqrt{\alpha}\;\delta}{\sqrt{2}}.
\]
\end{proof}

\begin{theorem}[Regularity of the Lifted Manifold]\label{sec:regularity-proof}
    Let $M \subset \mathbb{R}^n$ be a $C^r$ submanifold of dimension~$d$ with $r \geq 2$.  Then the lifted embedding $\Phi : M \to \mathbb{R}^{n+n^2}$, $\Phi(x) = \bigl(x,\,\sqrt{\alpha/2}\;\mathrm{vec}(P_x)\bigr)$, is a $C^{r-1}$ embedding, and the lifted manifold $\widehat{M} = \Phi(M)$ is a $C^{r-1}$ submanifold of dimension~$d$.
\end{theorem}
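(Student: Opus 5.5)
The plan is to show three things about $\Phi$: it is $C^{r-1}$, it is an immersion, and it is a topological embedding. Together these make $\widehat{M}=\Phi(M)$ a $C^{r-1}$ embedded submanifold of dimension~$d$ by the standard characterization of embedded submanifolds as images of embeddings~\cite{lee2003smooth}.

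\emph{Regularity.} Work in a local $C^r$ parametrization $\psi:V\subset\mathbb{R}^d\to M$ around any point. The columns of $D\psi$ form a $C^{r-1}$ frame $A(u)\in\mathbb{R}^{n\times d}$ of full rank $d$, so $A^\top A$ is invertible with $C^{r-1}$ inverse (matrix inversion is real-analytic on invertible matrices). The tangent projector has the basis-independent formula
\[
P_{\psi(u)} = A(u)\bigl(A(u)^\top A(u)\bigr)^{-1}A(u)^\top ,
\]
which is therefore $C^{r-1}$ in $u$. Composing with the $C^r$ chart inverse shows that $x\mapsto \mathrm{vec}(P_x)$ is $C^{r-1}$ on $M$. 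The first component $x\mapsto x$ is $C^r$, so $\Phi$ is $C^{r-1}$; since $r\ge2$, this is at least $C^1$, and differentials make sense.

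\emph{Immersion.} For $v\in T_xM$ the differential is
\[
d\Phi_x(v)=\bigl(v,\ \sqrt{\alpha/2}\,\mathrm{vec}(\mathcal{P}_v)\bigr),
\]
as computed in \S\ref{sec:lifted-tangents}. Its first block is the identity on $T_xM$, so $d\Phi_x(v)=0$ forces $v=0$. Quantitatively, $\|d\Phi_x(v)\|\ge\|v\|$, so $d\Phi_x$ is injective with rank $d$ regardless of curvature.

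\emph{Embedding.} Let $\pi:\mathbb{R}^{n+n^2}\to\mathbb{R}^n$ be projection onto the first $n$ coordinates. Then $\pi\circ\Phi=\mathrm{id}_M$, so $\Phi$ is injective, and its inverse on $\widehat{M}$ is the restriction of the continuous map $\pi$. Hence $\Phi:M\to\widehat{M}$ is a homeomorphism onto its image with the subspace topology. Equivalently, $\widehat{M}$ is the graph of the $C^{r-1}$ map $x\mapsto\sqrt{\alpha/2}\,\mathrm{vec}(P_x)$ over $M$. This gives a direct local description: if $M$ is locally a $C^r$ graph in some coordinates, then $\widehat{M}$ is locally a $C^{r-1}$ graph over the same $d$ coordinates, which yields the submanifold charts explicitly.

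I expect the main obstacle to be the regularity step. Getting exactly $C^{r-1}$, rather than something weaker, depends on the projector formula not losing further derivatives and on the formula being chart-independent. I would handle both through the normal-equations expression above, using invariance under $A\mapsto AG$ for invertible $G$ to show that the locally defined projectors agree on chart overlaps. The immersion and embedding steps are then immediate from the graph structure; in particular, no compactness is needed, because the inverse is globally the linear projection $\pi$.
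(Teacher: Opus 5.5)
Your proposal is correct and follows the paper's proof essentially step for step. The paper also uses $\pi\circ\Phi=\mathrm{id}_M$ to obtain injectivity, immersivity, and a continuous inverse. It then derives $C^{r-1}$ regularity from the same parametrization-free formula $P_x = J(J^\top J)^{-1}J^\top$, with smoothness of matrix inversion on the invertible Gram matrix.
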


\begin{proof}
The map $\Phi$ is a section of the projection 
$\pi : \mathbb{R}^{n+n^2} \to \mathbb{R}^n$ onto the first factor: 
by construction, $\pi \circ \Phi = \mathrm{id}_M$. For any section, 
injectivity ($\Phi(p) = \Phi(q) \Rightarrow p = \pi(\Phi(p)) = 
\pi(\Phi(q)) = q$), immersivity ($d\pi \circ d\Phi = \mathrm{id}$ 
implies $d\Phi$ is injective), and topological embedding 
($\Phi^{-1} = \pi|_{\widehat{M}}$ is continuous) are automatic. It 
therefore suffices to show that $\Phi$ is $C^{r-1}$.

The first component of $\Phi$ is the inclusion 
$M \hookrightarrow \mathbb{R}^n$, which is $C^r$. For the 
second component, let $\varphi : V \to M$ be a $C^r$ local 
parametrization with Jacobian 
$J(x) = D\varphi_{\varphi^{-1}(x)} \in \mathbb{R}^{n \times d}$. 
The projector admits the parametrization-free expression
\[
P_x = J(x)\,\bigl(J(x)^\top J(x)\bigr)^{-1}\, J(x)^\top.
\]
Since $\varphi$ is $C^r$, the Jacobian $J$ is $C^{r-1}$. The Gram 
matrix $J^\top J$ is $C^{r-1}$ and invertible (since $J$ has full 
column rank), and matrix inversion is smooth on the open set of 
invertible matrices. Therefore $x \mapsto P_x$ is $C^{r-1}$, and 
hence so is $\Phi$.
\end{proof}

\end{document}